\documentclass[preprint,12pt,square]{elsarticle}



\usepackage{amssymb}
\usepackage{latexsym}
\usepackage{mathrsfs}
\usepackage{multirow}
\usepackage{graphicx}
\usepackage{subfig}
\usepackage{caption}
\usepackage{subcaption}
\usepackage{amsmath}
\usepackage{amsthm}
\usepackage{commath}
\usepackage{gensymb}
\usepackage{booktabs}
\usepackage{makecell}
\usepackage{hyperref}
\usepackage[capitalise]{cleveref}
\usepackage{xcolor}
\usepackage{lineno}
\usepackage{tabularx}
\definecolor{mygreen}{RGB}{255,165,0}



\journal{Journal of computational physics}

\begin{document}

\begin{frontmatter}



\title{An energy-stable phase-field model for droplet icing simulations}


\author[1]{Zhihua Wang}
\author[2]{Lijing Zhou}
\author[3]{Wenqiang Zhang}
\author[4]{Xiaorong Wang}
\author[1]{Shuguang Li}
\author[4]{Xuerui Mao\corref{cor1}}
\cortext[cor1]{Corresponding author.}\ead{xmao@bit.edu.cn}
\affiliation[1]{organization={Faculty of Engineering, University of Nottingham},
            city={Nottingham},
            postcode={NG7 2RD}, 
            country={UK}}
\affiliation[2]{organization={School of Aerospace Engineering, Beijing Institute of Technology},
            city={Beijing},
            postcode={100081},
            country={China}}
\affiliation[3]{organization={School of Mechatronical Engineering, Beijing Institute of Technology},
            city={Beijing},
            postcode={100081}, 
            country={China}}
\affiliation[4]{organization={Advanced Research Institute of Multidisciplinary Sciences, Beijing Institute of Technology},
            city={Beijing},
            postcode={100081}, 
            country={China}}   
\begin{abstract}
A phase-field model for three-phase flows is established by combining the Navier-Stokes (NS) and the energy equations, with the Allen-Cahn (AC) and Cahn-Hilliard (CH) equations and is demonstrated analytically to satisfy the energy dissipation law. A finite difference scheme is then established to discretize the model and this numerical scheme is proved to be unconditionally stable. Based on this scheme, the droplet icing process with phase changing is numerically simulated and the pointy tip of the icy droplet is obtained and analyzed. The influence of the temperature of the supercooled substrate and the ambient air on the droplet freezing process is studied. The results indicate that the formation of the droplet pointy tip is primarily due to the expansion in the vertical direction during the freezing process. Lower substrate temperatures can accelerate this process. Changes in air temperature have a relatively minor impact on the freezing process, mainly affecting its early stages. Moreover, our results demonstrate that the ice front transitions from an approximately horizontal shape to a concave one. Dedicated physical experiments were conducted and the measured solidification process matches the results of the proposed phase-field method very well.
\end{abstract}

\begin{keyword}
Droplet icing \sep Energy-stable method \sep Phase-field method  \sep Allen-Cahn equation \sep Cahn-Hilliard equation 


\end{keyword}

\end{frontmatter}


\section{Introduction}
Icing has received significant attention in scientific research and engineering applications not only because of its influence on natural ecosystems and human activities but also due to its adverse impacts on engineering applications, e.g., road traffic, aeronautics, electrical grids, etc. \cite{akhtar2023comprehensive, tiwari2023droplet}.

On the positive side, icing plays a vital role in nature. Ice formations like glaciers and snowflakes are iconic features of Earth's landscapes. In aquatic environments, ice acts as an insulating barrier, preventing rapid heat loss from water and supporting aquatic life during winter. In medicine, icing therapy is commonly used to reduce swelling and accelerate recovery \cite{bleakley2004use}. Historically, ice has been used for cooling, with early examples dating back to 400 BC, when Persian engineers developed techniques for ice storage and cooling in the desert \cite{hosseini2012overview}. Nowadays, water freezing is not only for cooling but also for meteorology and climate change \cite{knopf2020stochastic}, spray freezing \cite{WANNING2015136, makkonen1987salinity}, ground freezing and thermal energy storage \cite{akhtar2023comprehensive, xu2022experimental}, to name a few. However, icing also poses significant challenges in engineering. Ice accumulation on roads creates hazardous conditions, while ice on aircraft wings reduces lift and increases drag, leading to safety concerns \cite{andersson2011impact, hatch2015national, valarezo1993maximum}. Icing on wind turbine blades can cause up to 50 per cent production loss and lead to structural damage \cite{lv2014bio, dalili2009review}. Thus, while icing offers natural benefits and practical uses, it also presents considerable challenges, particularly in fields where safety and efficiency are paramount. A crucial step in addressing these challenges is investigating the mechanism of the single droplet icing process since it is the very first stage of ice accretion \cite{yue2019freezing} and is of vital significance to the development of icing prediction, anti-icing and de-icing methods \cite{cebeci2003aircraft, thomas1996aircraft}.

One of the most intriguing phenomena of the single droplet icing is the formation of the pointy tip after freezing. Efforts have been therefore devoted to investigating the mechanism of the tip singularity formation. Snoeijer et al.~\cite{snoeijer2012pointy} revealed that the singular tip originates from the density reduction in the icing process and an analytical geometric model elucidating the correlation between the cone angle of the tip and the density ratio of ice and water was proposed. Marin et al.~\cite{marin2014universality} experimentally demonstrated that the cone angles of the tip are universal and independent of the temperature of the supercooled substrate and the solidification rate. Subsequently, experimental and theoretical analyses of the freezing droplets with different contact angles and temperatures \cite{zhang2019shape,tembely2019comprehensive}, different subcooled surfaces \cite{zhang2018experimental,stiti2020icing}, and various heights of droplets from the cold substrates \cite{hu2020frozen} were carried out successively. Effects of other parameters, such as gravity \cite{zeng2022influence}, air humidity \cite{sebilleau2021air} and substrates wettability on the droplet icing \cite{chang2023experimental,pan2019experimental}, the asymmetric cooling influence on the orientation of the freezing tips \cite{starostin2022effect}, the re-icing process on superhydrophobic surface \cite{chu2019droplet}, and numerous other experiments \cite{jin2010icing,zhao2021freezing,starostin2022universality,schremb2016solidification,jin2017experimental}, have enriched our insight into the droplet icing process and the formation of the icy tips.

Compared to the extensive experimental investigations, numerical studies on the mechanism of droplet icing with pointy tip formation are limited. Among them, the front-tracking method \cite{vu2015numerical}, the level-set method \cite{lian2017experimental}, the phase-field method \cite{hagiwara2017ice}, the volume-of-fluid method \cite{yao2018experimental,zhang2018simulation}, the improved smoothed particle hydrodynamics method \cite{zhang2022numerical}, the MDPDE method (many-body dissipative particle dynamics with energy conservation configurations) \cite{wang2020mesoscopic}, and enthalpy-based methods \cite{chaudhary2014freezing, wang2021numerical} have been applied in the simulation of droplet icing processes. Although some of the results agreed with the experimental ones and the pointy tips were captured \cite{vu2015numerical,zhang2018simulation,wang2020mesoscopic}, in most of the aforementioned studies \cite{chang2023experimental,hagiwara2017ice,yao2018experimental,zhang2022numerical,chaudhary2014freezing,wang2021numerical}, pointy tips were not obtained, demonstrating that numerically capturing the pointy tip of the droplet remains a challenge. Therefore, it is important to develop a numerical model to simulate the droplet freezing process, considering the influence of the surrounding air.

To study the icing process and the pointy tip formation in such a three-phase (ice, water, and air) flow, how to capture the interfaces between different phases is one of the critical issues. In the mid-20th century, the idea of a non-zero thickness interface was developed, and the phase-field method was proposed \cite{cahn1958free}. This method circumvents the imposition of boundary conditions on the interface by introducing additional variables continuous in the domain so that the physical variables do not exhibit mutations at the interface \cite{mirjalili2019comparison}. Taking this advantage, this method was then applied in the simulation of two-phase flows such as solidification \cite{kobayashi1993modeling, steinbach2009phase}, melting \cite{li2018phase, hester2020improved}, boiling \cite{juric1998computations, borcia2005phase}, and sublimation \cite{reitzle2019direct,kaempfer2009phase}, as well as three-phase flows \cite{zhang2022phase, huang2022consistent} by implementing the Cahn-Hilliard (CH) equation \cite{cahn1958free} and Allen-Cahn (AC) equation \cite{allen1979microscopic} simultaneously. Despite these successful applications, there is a lack of modelling of droplet icing with pointy tip formation based on the phase-field method. 

Therefore, the purpose of this study is to establish a three-phase phase-filed method and simulate the droplet icing process. In this method, two order parameters (denoted as $\phi$ and $c$ in the following) are introduced to describe different phases with interfaces. $\phi \in [-1, 1]$ is used to distinguish air ($\phi=-1$) from water ($\phi=1$), while $c \in [-1, 0]$ is employed to separate water ($c=0$) from ice ($c=-1$). It is worth mentioning that the definition of the range for $\phi$ and $c$ is arbitrary, for example, $c$ can be defined in the range $\left[ -1, +1 \right]$ \cite{yang2006numerical, shen2012modeling, shen2010phase}. In this paper, we refer to Ref. \cite{zhang2022phase} and define $c$ to be in the range $\left[ -1, 0 \right]$. The CH equation is applied to model the evolution of the interface between water and air due to the volume conservation of the whole domain while the AC equation is employed to model the phase change from water to ice since the influence of the latent heat is considered in a specific form of the AC equation \cite{zhang2022phase, boettinger2002phase}. By combining these phase-field equations (CH and AC equations) with the Navier-Stokes (NS) equations and an energy conservation equation, an AC-CH-NS phase-field model for three-phase flows is established.

The satisfaction of the energy dissipation law, indicating the energy stability of the methods, is one of the important properties of most phase-field models \cite{shen2012modeling, yang2023modified2, yang2023phase2}. Previous work has well studied this law for two-phase models, including the AC-NS and CH-NS frameworks \cite{shen2012modeling}. The satisfaction of the proposed AC-CH-NS model to this law is proved in the present work, demonstrating that the proposed scheme maintains numerical stability. To the best of our knowledge, this is the first study to show the energy dissipation of the AC-CH-NS phase-field model with a discrete energy-stable computational scheme.

The rest of the paper is organized as follows: in Section \ref{2}, a numerical phase-field model for three-phase flows combining the AC equation, the CH equation, the NS equations, and the energy conservation equation is established, before the proof of the energy dissipation law of the proposed model. In Section \ref{3}, an energy-stable computational algorithm including the temporal and spatial discretization schemes with finite difference method (FDM) is presented. In Section \ref{4}, the numerical results of droplet icing are presented and analysed with pointy tips observed at the top of the icy droplets. The effects of the substrate temperature and the ambient air are investigated by parameter studies. In Section \ref{5}, conclusions are drawn.

\section{Proposed model for three-phase flows} \label{2}
The Allen-Cahn equation \cite{allen1979microscopic} is used to model the phase change from water to ice:
\begin{equation}
    \frac{\partial c}{\partial t}+\boldsymbol u \cdot \nabla c=M_c[\xi_c ^2 \nabla^2c-F'(c)],
    \label{ac}
\end{equation}
where $\boldsymbol u$ denotes the velocity vector, $t$ is time, $\xi_c$ denotes the thickness of the ice-water interface, and $M_c$ is the phenomenological mobility coefficient, whose larger value speeds up the phase-changing process. Refer to \cite{zhang2022phase} for a systematic parameter study on $M_c$. $F'(c)$ denotes the derivative of $F(c)$ with respect to $c$, where the double-well potential, $F(c)$, with bistable solutions at both ice phase ($c = -1$) and water phase ($c = 0$), is defined as \cite{zhang2022phase, boettinger2002phase, kobayashi2010brief}: 
\begin{equation}
    	F(c)= H_1(c)+\frac{L}{W_a}\frac{T_M-T}{T_M}H_2(c), \label{fc}
\end{equation}
where $H_1(c)=\frac{1}{2}(c+1)^2 c^{2}$ gives a symmetric part of the potential. It is worth mentioning that $H_1(c)=\frac{1}{4}(c+1)^2 (c-1)^{2}$ was adopted in Refs \cite{yang2006numerical, shen2012modeling, huang2020consistent2} because of the different definition of the bistable solutions ($c=-1$ and $c=1$ in their work are defined as the bistable solutions at their defined two phases, respectively). $T$ is the temperature of the flow field, $T_M$ is the melting point of the solid phase, and $L$ is the latent heat per unit volume. $W_a=3\sigma_c / \xi_c$ represents an energy hump \cite{boettinger2002phase} with $\sigma_c$ denoting the surface tension coefficient. The second term on the right-hand side of Eq.~(\ref{fc}) describes the influence of latent heat of fusion for phase changes with an interpolating function $H_2(c)$, which is given as $H_2'(c)=30 H_1(c)$ with prime representing the derivative with respect to $c$ in Refs \cite{zhang2022phase, boettinger2002phase}, and $H_2 = 0$ in \cite{yang2006numerical, shen2012modeling, shen2010phase, shen2009efficient, huang2020consistent, liu2003phase} and other literature, which can be viewed as specific cases with the second term of the right-hand side of Eq.~(\ref{fc}) not considered.

The Cahn-Hilliard equation \cite{cahn1958free} is employed to model the evolution of the interface between water and air:
\begin{equation}
	\frac{\partial \phi}{\partial t} + \boldsymbol u \cdot \nabla\phi = M_\phi \nabla ^{2} \mu_{\phi} 
 \label{ch},
\end{equation}
where another phase-field order parameter $\phi$ is used with $\phi=1$ denoting the water phase and $\phi=-1$ the air phase. Similar to $c$, the definition of the range for $\phi$ is also arbitrary, and in this paper, we adopt the widely-used range of $[-1, 1]$. $M_\phi$ is the phenomenological mobility, and the chemical potential $\mu_{\phi}$ is \cite{zhang2022phase, xu2018sharp}:
\begin{equation}
	\mu_\phi= -\frac{3\sigma_\phi}{2\sqrt 2 \xi_\phi}\left [ \xi^2_\phi \nabla ^{2}\phi -G'(\phi)\right ],
 \label{muphi}
\end{equation}
where $\sigma_\phi$ is the surface tension coefficient, $\xi_\phi$ represents the thickness of the water-air interface, and $G'(\phi)$ denotes the derivative of $G(\phi)$ with respect to $\phi$, where $G(\phi) = \frac{1}{4} (\phi-1)^2 (\phi+1)^2 $ is the double-well function with bistable solutions at both $\phi = -1$ and $\phi = 1$.

The volume fractions of the air, water, and ice in the proposed model are defined as $\alpha_1=(1-\phi)/2, \alpha_2=(1+c)(1+\phi)/2, \alpha_3=(-c)(1+\phi)/2$, where the subscripts 1, 2, and 3 represent pure air, pure water, and pure ice respectively \cite{zhang2022phase, huang2022consistent}. $\alpha_1+\alpha_2+\alpha_3=1$ is always true in the whole domain.

Based on the volume fractions of the phases, density, viscosity, thermal conductivity, and specific heat capacity are obtained by linear interpolations \cite{zhang2022phase, huang2022consistent}:
\begin{equation} 
\begin{aligned}
	\rho(c,\phi)&=\alpha_1 \rho_1 + \alpha_2 \rho_2 + \alpha_3 \rho_3,\\
    \mu(c,\phi)&=\alpha_1 \mu_1 + \alpha_2 \mu_2 + \alpha_3 \mu_3, \\   \kappa(c,\phi)&=\alpha_1 \kappa_1 + \alpha_2 \kappa_2 + \alpha_3 \kappa_3,\\    c_p(c,\phi)&=\alpha_1 c_{p1} + \alpha_2 c_{p2} + \alpha_3 c_{p3},
 \label{properties}
\end{aligned}
\end{equation}
where $\rho$ is the density, $\mu$ is the viscosity, $\kappa$ is the thermal conductivity, and $c_p$ is the specific heat capacity. 

The motion of the flow is governed by the NS equations:
\begin{equation}
	\rho \frac{\partial \boldsymbol u}{\partial t} + \rho (\boldsymbol u \cdot \nabla) \boldsymbol u  =-\nabla p + \nabla\cdot \boldsymbol\tau + \boldsymbol F_b,
 \label{momentum}
\end{equation}
\begin{equation}
    \nabla \cdot \boldsymbol{u} =0,
    \label{mass}
\end{equation}
where $\boldsymbol\tau= \mu D(\boldsymbol u)+\boldsymbol\tau_e $ is the total stress tensor with $D(\boldsymbol u)=\nabla\boldsymbol u+(\nabla\boldsymbol u)^T$, superscript $T$ representing the transpose of a matrix, and $\boldsymbol \tau_e$ the extra elastic stress induced by the interfacial surface tension. $\boldsymbol F_b=\rho \boldsymbol g$ is the body force with $\boldsymbol g$ denoting the gravitational acceleration.

The extra elastic stress induced by the interfacial surface tension between different phases can be written as \cite{shen2012modeling, liu2003phase, liu2001eulerian}:
\begin{equation}
\boldsymbol\tau_e= -\lambda_c\nabla c \otimes \nabla c -\lambda_\phi\nabla\phi \otimes \nabla\phi,
\end{equation}
where $\lambda_c=\frac{3}{2\sqrt{2}}\sigma_c \xi_c$ and $\lambda_\phi=\frac{3}{2\sqrt{2}}\sigma_{\phi} \xi_\phi$ are the mixing energy density parameters of the ice-water interface and the water-air interface, respectively \cite{shen2012modeling}.
 
The divergence of $\boldsymbol\tau_e$ is:
\begin{equation}
\begin{aligned}
    \nabla\cdot\boldsymbol\tau_e =
    -\lambda_c(\nabla^{2}c\nabla c-\frac{1}{2} \abs{\nabla c}^2)
    -\lambda_\phi(\nabla^{2}\phi\nabla\phi-\frac{1}{2} \abs{\nabla\phi}^2).
    \label{taue}
\end{aligned}
\end{equation}

Absorbing the gradient terms in Eq.~(\ref{taue}) into the pressure $p$, the NS equations can be rewritten as:
\begin{equation}
\begin{aligned}
	\rho \frac{\partial \boldsymbol u}{\partial t} + \rho (\boldsymbol u \cdot \nabla) \boldsymbol u  =-\nabla p  + \nabla \cdot (\mu D(\boldsymbol u ))  -\lambda_c\nabla^{2}c\nabla c -\lambda_\phi\nabla^{2}\phi\nabla\phi + \boldsymbol F_b,
\label{momentum_2}
\end{aligned}
\end{equation}
\begin{equation}
    \nabla \cdot \boldsymbol{u} =0,
    \label{mass_2}
\end{equation}
where, for simplicity, $p$ is still used to denote the modified pressure.

The energy equation for icing is employed in this paper to guarantee the energy conservation of the system \cite{zhang2022phase}:
\begin{equation}
\begin{aligned}
\frac {\partial (\rho c_p T)}{\partial t}+\nabla \cdot (\rho c_p T \boldsymbol u) =\frac{\partial \alpha_3}{\partial t}L+\nabla\cdot(\kappa \nabla T )+\Phi + \boldsymbol u \cdot \boldsymbol F_b,
\label{energy}
\end{aligned}
\end{equation}
where the first term on the right side depicts the effect of the phase change to the enthalpy \cite{zhang2022phase, huang2022consistent}. $\Phi=(\mu D(\boldsymbol u))\boldsymbol:\nabla\boldsymbol u$ is the viscous dissipation rate, where $\boldsymbol:$ is the double inner product defined as $\boldsymbol a \boldsymbol :\boldsymbol b=a_{ij}b_{ji}$. $\rho$, $c_p$, $\mu$ and $\kappa$ are computed by Eq.~(\ref{properties}).

As shown above, the AC-CH-NS phase-field model for three-phase flows with phase change is established by combining the AC equation, CH equation, NS equations, and the energy equation, together with suitable boundary conditions, e.g.,
\begin{equation}
    \boldsymbol{u}|_{\partial \Omega} =0, \frac{\partial c}{\partial \boldsymbol n}|_{\partial \Omega} =0, \frac{\partial \phi}{\partial \boldsymbol n}|_{\partial \Omega} =0,\frac{\partial \mu_\phi}{\partial \boldsymbol n}|_{\partial \Omega} =0,\frac{\partial p}{\partial \boldsymbol n}|_{\partial \Omega} =0, T|_{\partial \Omega} = T_w,
    \label{bds}
\end{equation}
and initial conditions
\begin{equation}
    \boldsymbol{u}|_{t=0} =\boldsymbol{u}^0, c|_{t=0} =c^0, \phi|_{t=0} =\phi^0, \mu_\phi|_{t=0} =\mu^0_{\phi}, p|_{t=0} =p^0, T|_{t=0} =T^0, 
\end{equation}
where $\partial \Omega$ denotes the boundary of the domain, $\boldsymbol{n}$ is the unit normal vector pointing to the outer boundaries, $T_w$ is the temperature of the boundary, and superscript $0$ denotes the values at the initial time step. 

The proposed model satisfies the energy dissipation law, a basic property of most phase-field models \cite{shen2012modeling, yang2023modified2, yang2023phase2}. As described in Ref. \cite{shen2012modeling}, the terms $\nabla ^2 c \nabla c$ and $\nabla ^2 \phi \nabla \phi$ in Eq.~(\ref{momentum_2}) introduce difficulty in proving this law. Therefore, taking the advantage of Eq.~(\ref{ac}) and Eq.~(\ref{muphi}), we reformulate Eq.~(\ref{momentum_2}) by replacing $\nabla ^2 c \nabla c$ and $\nabla ^2 \phi \nabla \phi$ by $\frac{1}{M_c \xi^2_c} (\frac{\partial c}{\partial t} + \boldsymbol{u} \cdot \nabla  c) \nabla c$ and $-\frac{2\sqrt{2}}{3 \sigma_\phi \xi_\phi} \mu_\phi \nabla \phi$, respectively, after absorbing the terms $\frac{1}{\xi^2_c}F'(c)\nabla c = \frac{1}{\xi^2_c}\nabla (F(c))$ and $\frac{1}{\xi^2_\phi}G'(\phi)\nabla \phi= \frac{1}{\xi^2_\phi}\nabla (G(\phi))$ into $\nabla p$ (but still use the some notation for the modified $p$). The reformulated equivalent NS equations are:
\begin{equation}
\begin{aligned}
    \rho \frac{\partial \boldsymbol u}{\partial t} + \rho (\boldsymbol u \cdot \nabla) \boldsymbol u  = &-\nabla p + \nabla \cdot (\mu D(\boldsymbol u ))  \\
    &- \frac{\lambda_c }{M_c \xi^2_c} (\frac{\partial c}{\partial t} +  \boldsymbol{u} \cdot \nabla  c) \nabla c
    +\mu_\phi\nabla\phi
    + \boldsymbol F_b,
\label{momentum_incomp}
\end{aligned}
\end{equation}
\begin{equation}
    \nabla \cdot \boldsymbol{u} =0.
    \label{mass_3}
\end{equation}
\newtheorem{theorem}{Theorem}[section]

\begin{theorem}
The solutions of the Eqs.~(\ref{ac}), (\ref{ch}), (\ref{muphi}), (\ref{momentum_incomp}), and (\ref{mass_3}) with boundary conditions (\ref{bds}) dissipate a total energy functional as follows.
\begin{equation}
    E_0 = \int_\Omega  
    (\frac{1}{2} B_1 \xi^2_c \abs{\nabla c}^2
    + B_1 F(c)
    + \frac{1}{2} B_2 \xi^2_\phi \abs{\nabla \phi}^2
    + B_2 G(\phi)
    + \frac{\rho}{2} \abs{\boldsymbol{u}}^2
    + E_g)
    d \boldsymbol {x},
    \label{E0}
\end{equation}
where $B_1=\frac{3\sigma_c}{2\sqrt{2}\xi_c}$, and $B_2 = \frac{3\sigma_\phi}{2\sqrt{2}\xi_\phi}$ are positive constants. $E_g = \rho g h$ is the gravitational potential energy with $h$ the height.
\end{theorem}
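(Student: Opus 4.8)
The plan is to establish $\frac{dE_0}{dt}\le 0$ by differentiating the functional \eqref{E0} term by term in time and substituting the governing equations so that every cross-coupling between the phase fields and the velocity cancels, leaving only manifestly non-positive dissipation terms. Throughout I would use the boundary conditions \eqref{bds} to discard all boundary integrals produced by integration by parts: the no-slip condition $\boldsymbol u|_{\partial\Omega}=0$ annihilates the pressure and convective surface fluxes, while $\partial c/\partial\boldsymbol n=\partial\phi/\partial\boldsymbol n=\partial\mu_\phi/\partial\boldsymbol n=0$ annihilate the interfacial fluxes, and I would invoke $\nabla\cdot\boldsymbol u=0$ from \eqref{mass_3} repeatedly.

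First I would treat the Allen--Cahn free energy. Writing $\mu_c:=\xi_c^2\nabla^2 c-F'(c)$, integration by parts gives $\frac{d}{dt}\int_\Omega(\frac12 B_1\xi_c^2\abs{\nabla c}^2+B_1 F(c))\,d\boldsymbol x=-B_1\int_\Omega\mu_c\,\partial_t c\,d\boldsymbol x$. Eliminating $\partial_t c$ with \eqref{ac}, so that $\partial_t c=M_c\mu_c-\boldsymbol u\cdot\nabla c$, produces a dissipative part $-B_1 M_c\int_\Omega\mu_c^2$ and a transport part $+B_1\int_\Omega\mu_c(\boldsymbol u\cdot\nabla c)$. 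The analogous computation for the Cahn--Hilliard free energy, now using \eqref{muphi} to identify $\mu_\phi=-B_2(\xi_\phi^2\nabla^2\phi-G'(\phi))$ and \eqref{ch} to replace $\partial_t\phi=M_\phi\nabla^2\mu_\phi-\boldsymbol u\cdot\nabla\phi$, yields $-M_\phi\int_\Omega\abs{\nabla\mu_\phi}^2$ after one further integration by parts, together with a transport part $-\int_\Omega\mu_\phi(\boldsymbol u\cdot\nabla\phi)$.

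Next I would take the $L^2$ inner product of the reformulated momentum equation \eqref{momentum_incomp} with $\boldsymbol u$ and integrate. The pressure term vanishes by incompressibility and no-slip, and the viscous term becomes $-\int_\Omega\frac{\mu}{2}\abs{D(\boldsymbol u)}^2\le0$ using the symmetry of $D(\boldsymbol u)$. The crucial design feature is that the two surface-tension forces were rewritten precisely so that, after using \eqref{ac} again, the term $-\frac{\lambda_c}{M_c\xi_c^2}\int_\Omega(\partial_t c+\boldsymbol u\cdot\nabla c)(\boldsymbol u\cdot\nabla c)$ equals $-B_1\int_\Omega\mu_c(\boldsymbol u\cdot\nabla c)$ (since $\lambda_c=B_1\xi_c^2$), cancelling the Allen--Cahn transport part, while $+\int_\Omega\mu_\phi(\boldsymbol u\cdot\nabla\phi)$ cancels the Cahn--Hilliard transport part. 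For the kinetic energy, the local derivative $\int_\Omega\frac12(\partial_t\rho)\abs{\boldsymbol u}^2$ together with $-\int_\Omega\rho\boldsymbol u\cdot(\boldsymbol u\cdot\nabla)\boldsymbol u=\int_\Omega\frac{\abs{\boldsymbol u}^2}{2}\boldsymbol u\cdot\nabla\rho$ combine into $\int_\Omega\frac{\abs{\boldsymbol u}^2}{2}(\partial_t\rho+\boldsymbol u\cdot\nabla\rho)$, and the body-force work $\int_\Omega\boldsymbol u\cdot\boldsymbol F_b$ pairs with $\frac{d}{dt}\int_\Omega\rho g h$. Summing the survivors leaves $\frac{dE_0}{dt}=-B_1 M_c\int_\Omega\mu_c^2\,d\boldsymbol x-M_\phi\int_\Omega\abs{\nabla\mu_\phi}^2\,d\boldsymbol x-\int_\Omega\frac{\mu}{2}\abs{D(\boldsymbol u)}^2\,d\boldsymbol x\le0$, because $B_1,M_c,M_\phi,\mu>0$.

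The step I expect to be the main obstacle is the kinetic-plus-gravitational bookkeeping, because the combination $\int_\Omega\frac{\abs{\boldsymbol u}^2}{2}(\partial_t\rho+\boldsymbol u\cdot\nabla\rho)$ and the cancellation of the body-force work against $\frac{d}{dt}\int_\Omega\rho g h$ both vanish only if the density obeys the continuity identity $\partial_t\rho+\nabla\cdot(\rho\boldsymbol u)=0$, equivalently $\partial_t\rho+\boldsymbol u\cdot\nabla\rho=0$ under $\nabla\cdot\boldsymbol u=0$. This relation is not among the stated equations and must be justified from the interpolation \eqref{properties} together with the phase-field dynamics, or assumed as a modelling hypothesis, with care taken regarding the density jump across the freezing front. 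Establishing it, and fixing the sign convention $\boldsymbol F_b=\rho\boldsymbol g=-\rho g\nabla h$ so that the gravitational terms indeed cancel, is where I would concentrate the effort; the remaining manipulations are routine integrations by parts enabled by \eqref{bds}.
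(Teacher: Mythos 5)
Your proposal follows essentially the same route as the paper's proof: the paper tests Eq.~(\ref{ac}) with $\frac{\partial c}{\partial t}$, Eq.~(\ref{ch}) with $\mu_\phi$, Eq.~(\ref{muphi}) with $\frac{\partial \phi}{\partial t}$, and Eq.~(\ref{momentum_incomp}) with $\boldsymbol{u}$, then combines them with the weight $-\frac{B_1}{M_c}$ so that the two transport terms cancel exactly as you describe, the only cosmetic difference being that its Allen--Cahn dissipation is written as $-\frac{B_1}{M_c}\norm{\frac{\partial c}{\partial t}+\boldsymbol{u}\cdot\nabla c}^2$, which coincides with your $-B_1 M_c\norm{\mu_c}^2$ by Eq.~(\ref{ac}) itself. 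The step you flag as the main obstacle is real but is not resolved in the paper either: Eq.~(\ref{ns_l2}) simply asserts $\frac{d}{dt}\int_\Omega \frac{\rho}{2}\abs{\boldsymbol{u}}^2\,d\boldsymbol{x}$ on the left and $-\frac{d}{dt}\int_\Omega E_g\,d\boldsymbol{x}$ for the body-force work, which tacitly presumes the transport identity $\partial_t\rho+\boldsymbol{u}\cdot\nabla\rho=0$ that, as you correctly note, does not follow from Eqs.~(\ref{ac}), (\ref{ch}) and (\ref{properties}) when the phase densities differ, so your caution there identifies an implicit assumption of the paper's argument rather than a gap in your own.
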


\begin{proof}
Multiplying Eq.~(\ref{ac}) with $\frac{\partial c}{\partial t}$ and taking the $L^2$-inner product yields:
\begin{equation}
 \norm{\frac{\partial c}{\partial t}}^2 + (\boldsymbol{u} \cdot \nabla  c, \frac{\partial c}{\partial t} ) = - \frac{d}{dt}\int _\Omega \frac{1}{2} M_c \xi^2_c \abs{\nabla c}^2 d \boldsymbol {x} - \frac{d}{dt}\int _\Omega M_c F(c) d \boldsymbol {x},
    \label{ac_l2}
\end{equation}
where the $L^2$-inner product is defined as $(g_a, g_b)=  \int _\Omega g_a\cdot g_b\ d \boldsymbol {x}$ with $g_a$ and $g_b$ denoting two integrable functions, the $L^2$-norm is defined as $\norm {g_a}^2=(g_a, g_a)$, and the integral by parts is used \cite{shen2012modeling}.

Multiplying Eq.~(\ref{ch}) with $\mu_\phi$ and taking the $L^2$-inner product results in:
\begin{equation}
    (\frac{\partial \phi}{\partial t}, \mu_\phi) + (\boldsymbol{u}\cdot \nabla \phi, \mu_\phi) = -M_\phi \norm {\nabla \mu_\phi}^2.
    \label{ch_l2}
\end{equation}

By multiplying Eq.~(\ref{muphi}) with $\frac{\partial \phi}{\partial t}$ and taking the $L^2$-inner product, we get:
\begin{equation}
    (\mu_\phi, \frac{\partial \phi}{\partial t})= \frac{d}{dt} \int_\Omega B_2 G(\phi)\ d \boldsymbol {x} + \frac{d}{dt} \int_\Omega \frac{1}{2} B_2 \xi^2_\phi \abs{\nabla \phi}^2 d \boldsymbol {x}.
    \label{muphi_l2}
\end{equation}

Taking the $L^2$-inner product of Eq.~(\ref{momentum_incomp}) with $\boldsymbol{u}$ results in:
\begin{equation}
\begin{aligned}
       \frac{d}{dt} \int_\Omega \frac{\rho}{2} \abs{\boldsymbol{u}}^2 d \boldsymbol {x} =
    &  - \norm {\sqrt{\mu} D(\boldsymbol{u})}^2
       - \frac{B_1}{M_c}( (\frac{\partial c}{\partial t}\nabla c,\boldsymbol{u}) + \norm {\boldsymbol{u}\cdot \nabla c}^2 )\\
    &  + (\mu_\phi \nabla \phi, \boldsymbol {u})
    - \frac{d}{dt} \int_\Omega E_g d \boldsymbol {x},
    \label{ns_l2}
\end{aligned}
\end{equation}
where ($\boldsymbol{u} \cdot \nabla\boldsymbol{u}$, $\boldsymbol{u}$) $=$ 0 and ($-\nabla p, \boldsymbol{u}$) =0 are used \cite{yang2023modified2}.

By combining Eq.~(\ref{ns_l2}) with $- \frac{B_1}{M_c}$ timing Eq.~(\ref{ac_l2}), Eq.~(\ref{ch_l2}), and Eq.~(\ref{muphi_l2}), we obtain:
\begin{equation}
\begin{aligned}
         \frac{d}{dt} E_0 = - \norm {\sqrt{\mu}D(\boldsymbol{u})}^2         
         - \frac{B_1}{M_c}\norm {\frac{\partial c}{\partial t}+ \boldsymbol{u}\cdot \nabla c}^2  
         -M_\phi \norm {\nabla \mu_\phi}^2\leq 0.
\end{aligned}
\end{equation}

The above inequality completes the proof.
\end{proof}

\section{Computational algorithm} \label{3}
\subsection{Temporal discretization scheme}
In this section, we present an energy-stable first-order temporal discretization scheme for the established model based on the pressure stabilization method \cite{guermond2009splitting}.

For the temporal discretization, $t_0$ represents the total computational time, divided into $N$ steps with time step $\Delta t = t_0/N$. The variables at time step $n$ are defined as $(\cdot)^n$. Given the initial conditions of the variables $\boldsymbol{u}^0, c^0, \phi^0, \mu^0_\phi, p^0, T^0$, the variables at time step $n+1$ are calculated as follows.

The temporal discretization for the AC equation (\ref{ac}) is:
\begin{equation}
    \frac{c^{n+1}-c^n}{\Delta t} + S_1(c^{n+1}-c^n) + \boldsymbol{u}^{n+1}\cdot \nabla c^n = M_c[\xi_c ^2 \nabla^2c^{n+1}-F'(c^n)],
    \label{ac_time}
\end{equation}
where $S_1$ is a stabilizing parameter \cite{shen2012modeling} for the AC equation.

The CH equations (\ref{ch})-(\ref{muphi}) are temporally discretized as:
\begin{equation}
    \frac{\phi^{n+1}-\phi^n}{\Delta t} + \boldsymbol{u}^{n+1}\cdot \nabla \phi^n = M_\phi \nabla^2 \mu_{\phi} ^{n+1},
    \label{ch_time}
\end{equation}

\begin{equation}
	\mu_\phi ^{n+1} -S_2(\phi^{n+1}-\phi^n)= -\frac{3\sigma_\phi}{2\sqrt 2 \xi_\phi}\left [ \xi^2_\phi \nabla ^{2}\phi^{n+1} -G'(\phi^n)\right],
 \label{muphi_time}
\end{equation}
where $S_2$ is the stabilizing parameter for the CH equation \cite{shen2012modeling}.

$\frac{1}{\Delta t}[\frac{1}{2}(\rho^{n+1}+\rho^{n})\boldsymbol{u}^{n+1}-\rho^n \boldsymbol{u}^{n}] + \frac{1}{2} \nabla \cdot (\rho^n \boldsymbol{u}^n) \boldsymbol{u}^{n+1}$ is applied for the discretization of the first term $\rho \frac{\partial \boldsymbol{u}}{\partial t}$ in the NS equations (\ref{momentum_incomp}) \cite{guermond2009splitting}. Therefore, the pressure stabilization schemes for the NS equations are \cite{guermond2009splitting}:
\begin{equation}
    \begin{aligned}
       \frac{1}{\Delta t}[\frac{1}{2}(\rho^{n+1}& +\rho^{n})\boldsymbol{u}^{n+1}-\rho^n \boldsymbol{u}^{n}] + \frac{1}{2} \nabla \cdot (\rho^n \boldsymbol{u}^n) \boldsymbol{u}^{n+1} + \rho^n (\boldsymbol{u}^n \cdot \nabla) \boldsymbol{u}^{n+1}\\
      & = 
    - \nabla (2 p^n- p^{n-1})
    + \nabla \cdot (\mu^{n+1} D(\boldsymbol{u}^{n+1}))\\
        &- \frac{\lambda_c}{M_c \xi^2_c} (\frac{c^{n+1}-c^n}{\Delta t} + \boldsymbol{u}^{n+1} \cdot \nabla c^n) \nabla c^n 
        + \mu_\phi ^{n+1} \nabla \phi ^n
        + \boldsymbol{F}_b ^n,
        \label{NS1_time}
    \end{aligned}
\end{equation}
\begin{equation}
    \nabla^2(p^{n+1}-p^n) = \frac{\bar{\rho}}{\Delta t} \nabla \cdot \boldsymbol{u}^{n+1},
    \label{NS2_time}
\end{equation}
where $\rho^{n+1} =\alpha^{n+1}_1 \rho_1 + \alpha^{n+1}_2 \rho_2 + \alpha^{n+1}_3 \rho_3, \mu^{n+1}=\alpha^{n+1}_1 \mu_1 + \alpha^{n+1}_2 \mu_2 + \alpha^{n+1}_3 \mu_3$ with $\alpha^{n+1}_1=(1-\phi^{n+1})/2, \alpha^{n+1}_2=(1+c^{n+1})(1+\phi^{n+1})/2$, and $\alpha^{n+1}_3=(-c^{n+1})(1+\phi^{n+1})/2$. $\bar{\rho}= {\rm min}(\rho_1,\rho_2,\rho_3)$ is the minimum density of three phases.

The temporal discretization for the energy equation (\ref{energy}) is:
\begin{equation}
\begin{aligned}
&\frac { (\rho c_p T)^{n+1}- (\rho c_p T)^{n}}{\Delta t} +\nabla \cdot (\rho c_p T \boldsymbol u)^{n+1} \\
&=\frac{\alpha^{n+1}_3-\alpha^{n}_3}{\Delta t}L+\nabla\cdot(\kappa^{n+1} \nabla T^{n+1} )+\Phi^{n+1} + \boldsymbol u^{n+1} \cdot \boldsymbol F^{n+1}_b,
\label{energy_time}
\end{aligned}
\end{equation}
where $c_p^{n+1} =\alpha^{n+1}_1 c_{p_1} + \alpha^{n+1}_2 c_{p_2} + \alpha^{n+1}_3 c_{p_3}, \kappa^{n+1}=\alpha^{n+1}_1 \kappa_1 + \alpha^{n+1}_2 \kappa_2 + \alpha^{n+1}_3 \kappa_3$.

The boundary conditions at time step $n+1$ are as follows:
\begin{equation}
\begin{aligned}
   & \boldsymbol{u}^{n+1}|_{\partial \Omega} =0,
     \frac{\partial }{\partial \boldsymbol n}c^{n+1}|_{\partial \Omega} =0, \frac{\partial }{\partial \boldsymbol n}\phi^{n+1}|_{\partial \Omega} =0,\\
   & \frac{\partial }{\partial \boldsymbol n}\mu^{n+1}_\phi|_{\partial \Omega} =0,
   \frac{\partial }{\partial \boldsymbol n}p^{n+1}|_{\partial \Omega} =0, T^{n+1}|_{\partial \Omega} = T_w.
    \label{bds_times}
\end{aligned}
\end{equation}

\begin{theorem} 
\label{theorem 3}
For $S_1 \geq \frac{1}{2} M_c$ and $S_2 \geq B_2$, the scheme including Eqs.~(\ref{ac_time})-(\ref{bds_times}) satisfies the following discrete energy law:
\begin{equation}
\begin{aligned}
&\norm{\sqrt{\rho^{n+1}}\boldsymbol{u}^{n+1}}^2
+ \frac{(\Delta t)^2}{\bar{\rho}} \norm{\nabla p^{n+1}}^2
+ B_1 \xi^2_c \norm {\nabla c^{n+1}} ^2 
+2 B_1 (F(c^{n+1}), 1)
\\
&+ B_2 \xi^2_\phi \norm {\nabla \phi^{n+1}}^2
+2 B_2(G(\phi^{n+1}), 1) 
+2 (E^{n+1}_g, 1)\\
&+ 2 \Delta t (        
   \norm{\sqrt{\mu^{n+1}}D(\boldsymbol{u}^{n+1})}^2
+\frac{B_1}{M_c} \norm{\frac{c^{n+1}-c^n}{\Delta t} + \boldsymbol{u}^{n+1} \cdot \nabla c^n }^2
+ M_\phi \norm {\nabla \mu_\phi^{n+1}}^2) \\
& \leq 
\norm{\sqrt{\rho^{n}}\boldsymbol{u}^{n}} ^2
+ \frac{(\Delta t)^2}{\bar{\rho}} \norm{\nabla p^n}^2
+  B_1 \xi^2_c  \norm {\nabla c^{n}}^2
+2 B_1 (F(c^{n}), 1) 
\\
&+ B_2 \xi^2_\phi \norm {\nabla \phi^{n}}^2
+2 B_2 (G(\phi^{n}), 1)
+2 (E^{n}_g, 1).
\label{discrete_energy_law}
\end{aligned}
\end{equation}
\end{theorem}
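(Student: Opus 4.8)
The plan is to reproduce, at the discrete level, the four-multiplier argument used for the continuous energy law proved above. I would pair each discrete equation with the discrete analogue of its continuous multiplier, scaling so that the factor-two energies and the $2\Delta t$ dissipation of \eqref{discrete_energy_law} emerge directly: take the $L^2$ inner product of the Allen--Cahn step \eqref{ac_time} with $-\frac{2B_1}{M_c}(c^{n+1}-c^n)$, of the Cahn--Hilliard step \eqref{ch_time} with $2\Delta t\,\mu_\phi^{n+1}$, of the potential relation \eqref{muphi_time} with $2(\phi^{n+1}-\phi^n)$, and of the momentum step \eqref{NS1_time} with $2\Delta t\,\boldsymbol{u}^{n+1}$. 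Summing the four identities and discarding the manifestly nonnegative remainders should give \eqref{discrete_energy_law}.

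The two workhorses are summation by parts (with the homogeneous data \eqref{bds_times}) and the identity $2(a-b,a)=\norm{a}^2-\norm{b}^2+\norm{a-b}^2$, which turn $(\nabla^2 c^{n+1},c^{n+1}-c^n)$ and $(\nabla^2\phi^{n+1},\phi^{n+1}-\phi^n)$ into telescoping gradient energies plus nonnegative increments that are dropped. The explicitly treated potentials give $(F'(c^n),c^{n+1}-c^n)$ and $(G'(\phi^n),\phi^{n+1}-\phi^n)$; a Taylor expansion $F(c^{n+1})-F(c^n)=F'(c^n)(c^{n+1}-c^n)+\frac12 F''(\zeta)(c^{n+1}-c^n)^2$ (and likewise for $G$) recasts them as the telescoping double-well energies at the price of error terms $\frac12 F''(\zeta)\norm{\cdot}^2$. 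This is where the hypotheses enter: $S_1\norm{c^{n+1}-c^n}^2$ and $S_2\norm{\phi^{n+1}-\phi^n}^2$ dominate these errors exactly when $S_1\ge\frac12 M_c$ and $S_2\ge B_2$, using $\max_{[-1,0]}|H_1''|=1$ and $\max_{[-1,1]}|G''|=2$ (the latent-heat part of $F$ in \eqref{fc} being treated as a controlled lower-order contribution).

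For the velocity, the specific form of the density-weighted time derivative in \eqref{NS1_time} is engineered so that, tested with $2\Delta t\,\boldsymbol{u}^{n+1}$ in the $\rho^n$-weighted inner product, it produces $\norm{\sqrt{\rho^{n+1}}\boldsymbol{u}^{n+1}}^2-\norm{\sqrt{\rho^n}\boldsymbol{u}^n}^2+\norm{\sqrt{\rho^n}(\boldsymbol{u}^{n+1}-\boldsymbol{u}^n)}^2$, while the skew-symmetric convection pair $\frac12\nabla\cdot(\rho^n\boldsymbol{u}^n)\boldsymbol{u}^{n+1}+\rho^n(\boldsymbol{u}^n\cdot\nabla)\boldsymbol{u}^{n+1}$ vanishes against $\boldsymbol{u}^{n+1}$, and the viscous term returns the dissipation $\norm{\sqrt{\mu^{n+1}}D(\boldsymbol{u}^{n+1})}^2$ by the symmetry of $D$. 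The coupling terms then cancel as in the continuous argument: the Cahn--Hilliard/potential pair leaves $-2\Delta t(\boldsymbol{u}^{n+1}\cdot\nabla\phi^n,\mu_\phi^{n+1})$, which annihilates $+2\Delta t(\mu_\phi^{n+1}\nabla\phi^n,\boldsymbol{u}^{n+1})$ from momentum and leaves $-2\Delta t M_\phi\norm{\nabla\mu_\phi^{n+1}}^2$; and, using $\lambda_c=B_1\xi_c^2$, the $\nabla c^n$ contributions from \eqref{ac_time} and \eqref{NS1_time} combine into $-2\Delta t\frac{B_1}{M_c}\norm{\frac{c^{n+1}-c^n}{\Delta t}+\boldsymbol{u}^{n+1}\cdot\nabla c^n}^2$ by completing the square.

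I expect the pressure term to be the crux. Testing $-\nabla(2p^n-p^{n-1})$ with $2\Delta t\,\boldsymbol{u}^{n+1}$, integrating by parts, and inserting the pressure--Poisson relation \eqref{NS2_time} in the form $\nabla\cdot\boldsymbol{u}^{n+1}=\frac{\Delta t}{\bar\rho}\nabla^2(p^{n+1}-p^n)$ rewrites the pressure work through gradients of pressure increments; the identity above produces the telescoping $\frac{(\Delta t)^2}{\bar\rho}(\norm{\nabla p^{n+1}}^2-\norm{\nabla p^n}^2)$ but also a positive leftover $\frac{(\Delta t)^2}{\bar\rho}\norm{\nabla(p^{n+1}-p^n)}^2$ (together with a consecutive-increment cross term arising from the pressure extrapolation). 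The delicate point --- and the reason for choosing $\bar\rho=\min(\rho_1,\rho_2,\rho_3)$ --- is that this leftover must be absorbed by the velocity-increment term $\norm{\sqrt{\rho^n}(\boldsymbol{u}^{n+1}-\boldsymbol{u}^n)}^2$ generated above, exploiting $\rho^n\ge\bar\rho$ pointwise and the scheme's relation between $\boldsymbol{u}^{n+1}-\boldsymbol{u}^n$ and $\nabla(p^{n+1}-p^n)$; this is precisely the stability mechanism of the pressure--Poisson splitting of \cite{guermond2009splitting}, and I anticipate it being the most technical step. A similar discrete mass-consistency is then needed to match the body-force work $(\boldsymbol{F}_b^n,2\Delta t\,\boldsymbol{u}^{n+1})$ to $2(E_g^{n+1}-E_g^n,1)$. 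Collecting the telescoped quantities and discarding all nonnegative remainders finally yields \eqref{discrete_energy_law}.
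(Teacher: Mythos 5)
Your proposal is correct and takes essentially the same route as the paper's appendix proof: the identical test functions ($2(c^{n+1}-c^n)$ up to the $-B_1/M_c$ scaling, $2\Delta t\,\mu_\phi^{n+1}$, $2(\phi^{n+1}-\phi^n)$, $2\Delta t\,\boldsymbol{u}^{n+1}$), the same identity $(b-a,2b)=\norm{b}^2-\norm{a}^2+\norm{b-a}^2$, the same Taylor-remainder absorption by $S_1\geq\frac{1}{2}M_c$ and $S_2\geq B_2$, and the same pressure--Poisson mechanism in which the positive pressure leftover is cancelled against $\norm{\sqrt{\rho^n}(\boldsymbol{u}^{n+1}-\boldsymbol{u}^n)}^2$ via $\rho^n\geq\bar\rho$. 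The only cosmetic discrepancies are that the absorbed leftover is tied to the second difference $\nabla(p^{n+1}-2p^n+p^{n-1})=\frac{\bar\rho}{\Delta t}(\boldsymbol{u}^{n+1}-\boldsymbol{u}^n)$ rather than to $\nabla(p^{n+1}-p^n)$ (precisely the cross-term bookkeeping you anticipated), and your caveat about the latent-heat part of $F$ is if anything more careful than the paper, which simply bounds $F''=H_1''=6c^2+6c+1\leq 1$ on $[-1,0]$.
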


The proof of Theorem~\ref{theorem 3} is presented in \ref{app}. In this work, $S_1 = \frac{1}{2} M_c$ and $S_2 = B_2$ are applied. This theorem demonstrates that the discrete energy at time step $n+1$ is smaller than that at time step $n$, and therefore the scheme is unconditionally stable with the discrete energy at time step $n$ defined as the right-hand side of inequality (\ref{discrete_energy_law}).

\subsection{Spatial discretization scheme}
The differential operators are discretized by the finite difference method (FDM) with second-order central difference spatial discretization applied on the staggered grids. The main advantage of staggered grids over collocated grids is that they prevent the decoupling of pressure and velocity, thereby avoiding the checkerboard problem \cite{kawaguchi2002checkerboard,khorrami1991chebyshev}. The scalars such as phase-field order parameter $\phi$, $c$, density $\rho$, viscosity $\mu$, thermal conductivity $\kappa$, heat capacity $c_p$, pressure $p$, and temperature $T$ are stored at the cell centres, while vectors such as the velocities are evaluated at the cell surfaces. A two-dimensional formulation in Cartesian coordinates is presented as an example, which can be extended to three dimensions or degraded to one dimension straightforwardly. In this way, the cell centre is denoted as ($x_i, y_j$) while the cell surfaces are denoted as ($x_{i-\frac{1}{2}},y_j$), ($x_{i+\frac{1}{2}},y_j$), ($x_i,y_{j-\frac{1}{2}}$), and ($x_i,y_{j+\frac{1}{2}}$). As shown in Fig.~\ref{staggered grids}, the vector value $\boldsymbol F$ defined at the cell surfaces is written as $\boldsymbol F^x_{i-\frac{1}{2},j}$, $\boldsymbol F^x_{i+\frac{1}{2},j}$, $\boldsymbol F^y_{i,j-\frac{1}{2}}$, and $\boldsymbol F^y_{i,j+\frac{1}{2}}$ with subscripts denoting the positions of grids and superscripts representing $x-$ or $y-$ components of the vector. While the scalar parameters at the cell centres are marked as $\mathscr{F}_{i,j}$ or $\mathscr{H}_{i,j}$ in a similar manner. The spatial discretization applied for the central-difference interpolation, gradient terms, divergence terms, and Laplace terms used in the governing equations is presented as follows.
\begin{figure}[ht]
    \centering 
    \includegraphics[width=0.8\textwidth]{./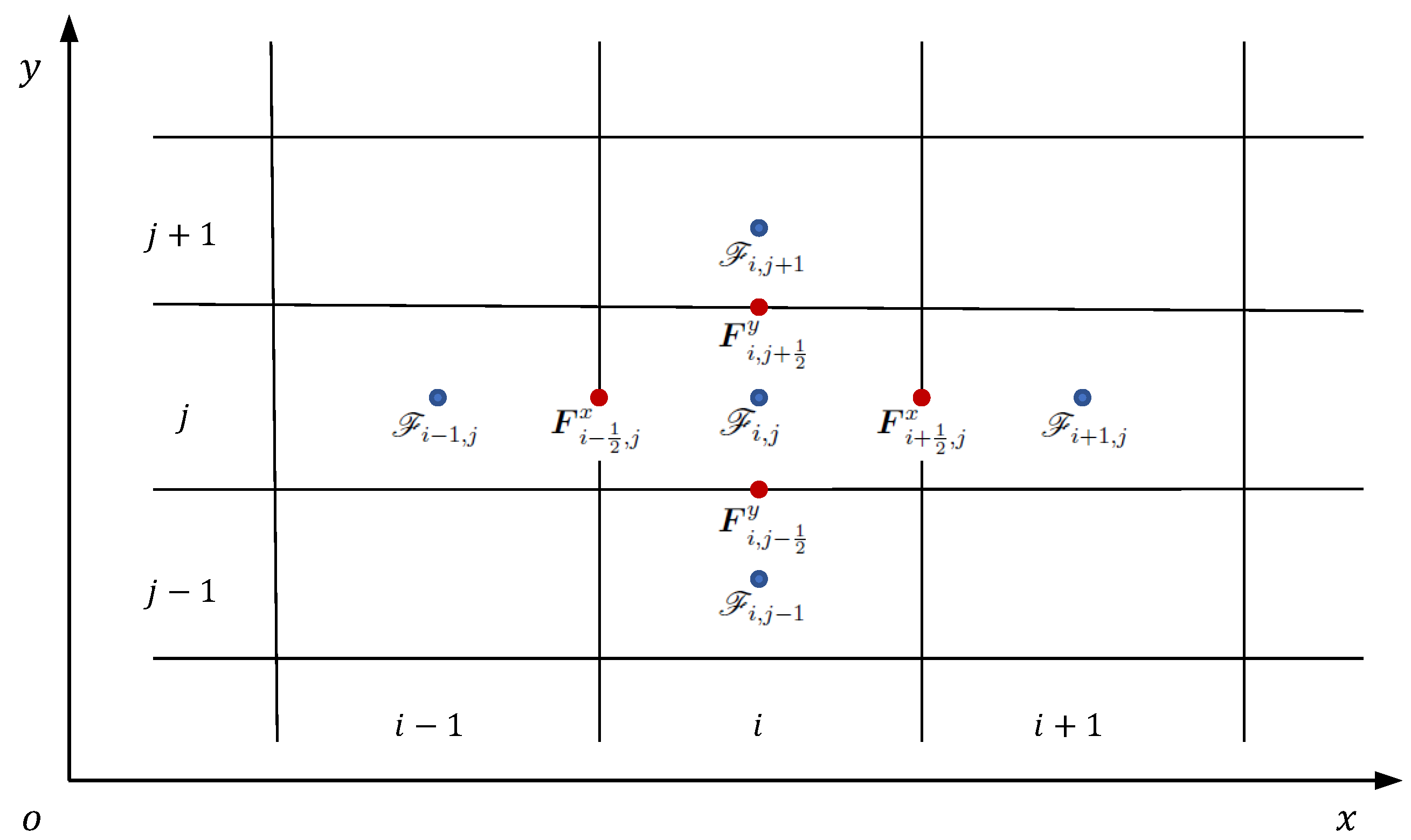}
    \caption{Staggered grids discretization.} %
\label{staggered grids}
\end{figure}

i) Central-difference interpolation

Central-difference interpolation is applied for the calculation of vectors at the cell centres and scalars at cell surfaces, e.g., $\boldsymbol{F}^x_{i,j} = ( \boldsymbol{F}^x_{i+\frac{1}{2},j}+\boldsymbol{F}^x_{i-\frac{1}{2},j})/2$.  $\mathscr{F}_{i,j+\frac{1}{2}} =(\mathscr{F}_{i,j+1}+\mathscr{F}_{i,j})/2$.

ii) Discrete gradient

The discrete gradient of a scalar at the cell centre is calculated as:
\begin{equation}
     \nabla \mathscr{F}_{i,j}
    = \frac{\mathscr{F}_{i+1,j}-\mathscr{F}_{i-1,j}}{2\Delta x} \boldsymbol i_x +\frac{\mathscr{F}_{i,j+1}-\mathscr{F}_{i,j-1}}{2\Delta y} \boldsymbol i_{y},
\end{equation}
where $\Delta x$ and $\Delta y$ stand for the cell size along the $x$ and $y$ directions, $\boldsymbol i_x$ and $\boldsymbol i_{y}$ are the standard unit vectors for the $x-$ and $y-$axes.

The convection term in Eq.~(\ref{ac}) or Eq.~(\ref{ch}), evaluated at the cell centre, is thus computed as:
\begin{equation}
\begin{aligned}
    \left (\boldsymbol F \cdot \nabla \mathscr{F} \right )_{i,j}
    =& \frac{(\boldsymbol F^x_{i+\frac{1}{2},j} + \boldsymbol F^x_{i-\frac{1}{2},j})(\mathscr{F}_{i+1,j}-\mathscr{F}_{i-1,j})}{4\Delta x}+ \\
& \frac{(\boldsymbol F^{y}_{i,j+\frac{1}{2}} + \boldsymbol F^{y}_{i,j-\frac{1}{2}})(\mathscr{F}_{i,j+1}-\mathscr{F}_{i,j-1})}{4\Delta y}.   
\end{aligned}
\end{equation}

iii) Discrete divergence

The divergence evaluated at the cell centre exhibits two different forms in the convection and diffusion terms in Eq.~(\ref{energy}). For the convection term in Eq.~(\ref{energy}), the divergence is 
\begin{equation}
\begin{aligned}
    \left [\nabla \cdot (\boldsymbol F \mathscr{F}) \right ]_{i,j}
    =&\frac{ \boldsymbol F^x_{i+\frac{1}{2},j} (\mathscr{F}_{i+1,j} + \mathscr{F}_{i,j}) -  \boldsymbol F^x_{i-\frac{1}{2},j} (\mathscr{F}_{i-1,j} + \mathscr{F}_{i,j})}{2 \Delta x}\\
    &+\frac{ \boldsymbol F^{y}_{i,j+\frac{1}{2}} (\mathscr{F}_{i,j+1} + \mathscr{F}_{i,j}) -  \boldsymbol F^{y}_{i,j-\frac{1}{2}} (\mathscr{F}_{i,j-1} + \mathscr{F}_{i,j})}{2 \Delta y},
\end{aligned}
\end{equation}
and the divergence for the diffusion term in the energy equation (\ref{energy}) is
\begin{equation}
\begin{aligned}
    &\left [\nabla \cdot (\mathscr{F} \nabla \mathscr{H}) \right ]_{i,j}=\\
   & \frac{ (\mathscr F_{i+1,j}+ \mathscr F_{i,j})(\mathscr H_{i+1,j}-\mathscr H_{i,j}) - (\mathscr F_{i-1,j}+ \mathscr F_{i,j})(\mathscr H_{i,j}-\mathscr H_{i-1,j})}{2(\Delta x)^2}\\
    & + \frac{ (\mathscr F_{i,j+1}+ \mathscr F_{i,j})(\mathscr H_{i,j+1}-\mathscr H_{i,j}) -  (\mathscr F_{i,j-1}+ \mathscr F_{i,j})(\mathscr H_{i,j}-\mathscr H_{i,j-1})}{2(\Delta y)^2}.
\end{aligned}
\end{equation}

iv) Discrete Laplace operator

The discrete Laplace operator at the cell centre is calculated as:
\begin{equation}
    \begin{aligned}
        \nabla^2 \mathscr F_{i,j} = \frac{\mathscr F_{i+1,j} -2 \mathscr F_{i,j} +\mathscr F_{i-1,j} }{(\Delta x)^2} + \frac{\mathscr F_{i,j+1} - 2\mathscr F_{i,j}+ \mathscr F_{i,j-1}}{(\Delta y)^2}.
    \end{aligned}
\end{equation}

\section{Numerical results} \label{4}
\begin{table*}[htbp]
	\centering
	\caption{Properties of air, water and ice}
	\begin{tabular}{ccccc}
		\toprule  
		Phase & \makecell[c]{Density\\ ($kg \cdot m^{-3}$)}  & \makecell[c]{Viscosity\\ ($Pa \cdot s$)} & \makecell[c]{Thermal conductivity\\ ($W \cdot m^{-1} \cdot K^{-1})$}  &\makecell[c]{Heat capacity\\ ($J \cdot K^{-1} \cdot kg^{-1}$)}   \\ 
		\midrule  
		air&1.2&$1.6\times10^{-5}$&0.0209&1003 \\
		\midrule
		water&998&$1\times10^{-3}$&0.5918&4200 \\
		\midrule
		ice&898&$1\times10^{2}$&2.25&2018 \\
		\bottomrule  
	\end{tabular} 
 \label {table properties}
\end{table*}

In this section, the droplet icing process on supercooled substrates is numerically simulated using the phase-field model with three phases established in Section \ref{3}. The time evolution of the droplet profiles and temperature fields, the influence of the supercooled substrates, and the effects of the surrounding air are studied, and the evolution of the energy $E_0$ is plotted, numerically demonstrating that the proposed scheme is energy stable. 
\begin{table*}[htbp]
	\centering
	\caption{Values of physical parameters}
	\begin{tabular}{cccc}
		\toprule  
		Parameter & Symbol & Value  & Unit    \\ 
		\midrule  
		Mobility of ice-water interface & $M_c$ & $2.00\times10^{-3}$ & $N^{-1} \cdot s^{-1} \cdot m^{2}$ \\
		\midrule
	\makecell[c]{Surface tension coefficient \\ of ice-water interface} & $\sigma_c$ & $7.27\times10^{-2}$ & $N \cdot m^{-1}$ \\
		\midrule
	\makecell[c]{Thickness coefficient \\ of ice-water interface} & $\xi_c$ & $8.00\times10^{-5}$ & $m$ \\
            \midrule
        Mobility of water-air interface  &  $M_\phi$ & $2.50\times10^{-11}$ & $N^{-1} \cdot s^{-1} \cdot m^{4}$ \\
		\midrule
	\makecell[c]{Surface tension coefficient \\ of water-air interface}  & $\sigma_\phi$ & $7.27\times10^{-2}$ & $N \cdot m^{-1}$ \\
		\midrule
\makecell[c]{Thickness coefficient \\ of water-air interface} &	$\xi_\phi$ & $8.00\times10^{-5}$ & $m$ \\
		\midrule
Latent heat per unit volume & $L$ & $3.34\times10^{8}$ & $J \cdot m^{-3}$ \\
		\midrule
Melting temperature & $T_M$ & $2.73\times10^{2}$ & $K$ \\
		\bottomrule 
	\end{tabular} 
 \label {table parameters values}
\end{table*} 
The material properties of pure ice, water, and air are shown in Table \ref{table properties} and are assumed to be constant and independent of temperature and pressure. We note that the viscosity of ice is set as $10^5$ times that of water, which is large enough to suppress the movement of pure ice. The values of the other physical parameters shown in Table \ref{table parameters values} will be adopted if not specified \cite{zhang2022phase,carlson2009modeling,hillig1998measurement,jacqmin2000contact, wang2024consistent}. The dynamics of the droplet icing process are governed by the Prandtl number $Pr$, the Bond number $Bo$, the Weber number $We$, and the Stefan number $St$ \cite{vu2015numerical}:
\begin{equation}
    Pr=\frac{c_{p2}\mu_2}{\kappa_2}, Bo=\frac{\rho_2gR^2}{\sigma_c}, We=\frac{\rho_2U_c^2R}{\sigma_c}, St=\frac{\left(T_M-T_{sub}\right)c_{p2}\rho_2}{L},
\end{equation}
where $U_c=R/\tau_c$ is the velocity scale with $R$ the radius of the droplet and $\tau_c=\frac{\rho_2c_{p2}R^2}{\kappa_2}$ the characteristic time scale.

\subsection{Droplet icing process} \label{3.1}
Fig.~\ref{icing} shows typical snapshots during the freezing process of a droplet. The droplet is deposited on a supercooled substrate, confined between two parallel plates in a Hele-Shaw cell \cite{schremb2016solidification, tropea2017physics}, where it is squeezed into a thin, circular shape. The initial shape of the water droplet is approximated as a circular segment with a radius of 1.6 mm and a height of 2.28 mm. The calculation domain size is 8 mm $\times$ 8 mm $\times$ 0.08 mm with the gap between the plates being 0.08 mm, which corresponds to the thickness of the droplet. The substrate temperature is -25°C and the initial temperatures of both water and air are set as 25°C. The dimensionless numbers are $Pr$ = 7.1, $Bo$ = 0.35, $We$ = 1.7 $\times{10}^{-7}$, and $St$ = 0.31. 200 $\times$ 200 $\times$ 2 grids are used for the simulation, with a time step of 1.6 $\times$ 10$^{-6}$ seconds. The solidification process takes 2.4 $\times$ 10$^{7}$ steps, and the simulation is performed in parallel on 50 cores (CPU: Hygon G86 7285), running for 48.8 hours to obtain the numerical results. Red, peachy orange, and blue regions represent water, ice, and air, respectively.
\begin{figure}[ht!]
    \centering 
    \includegraphics[width=0.99\textwidth]{./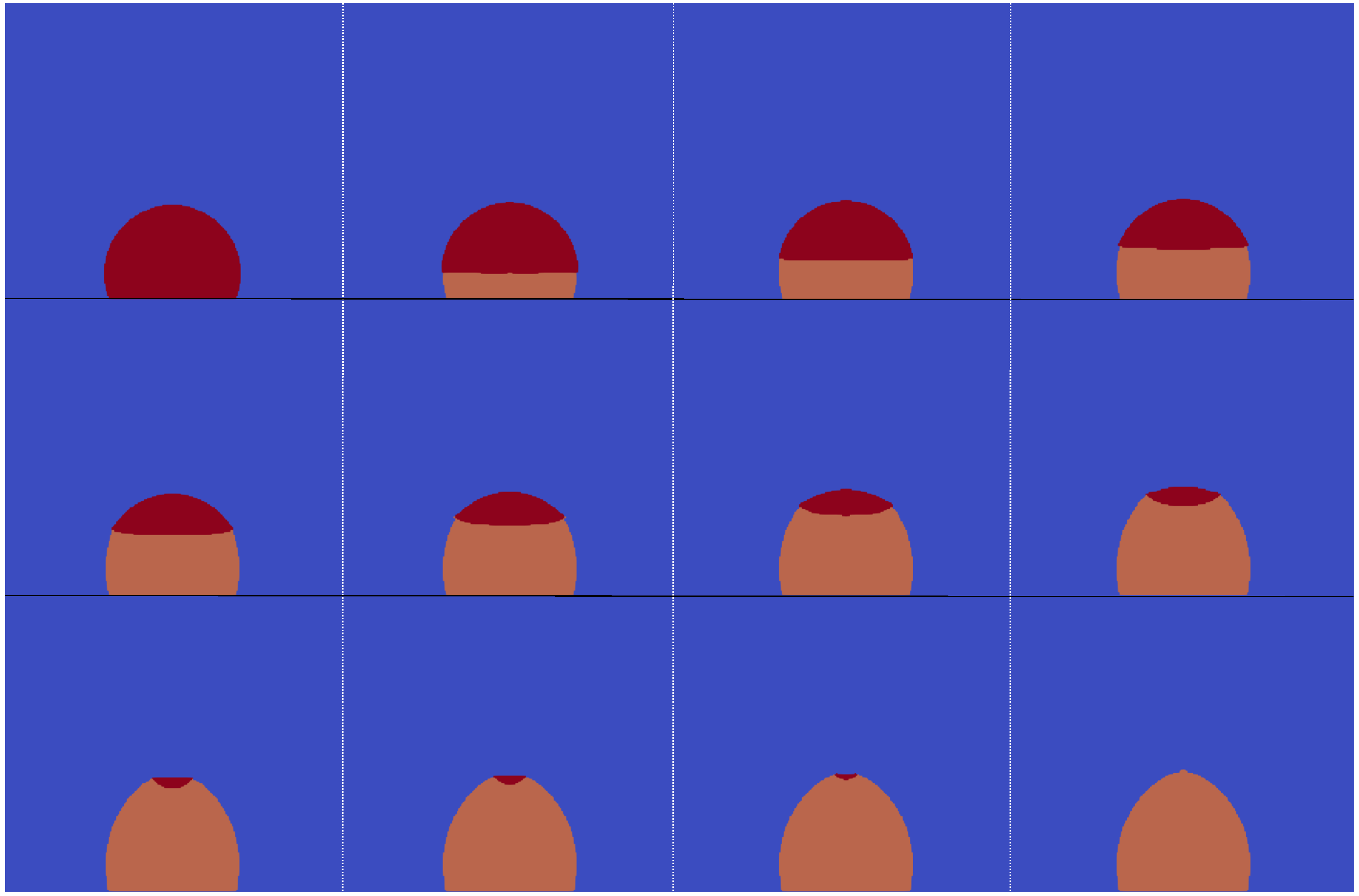}
    \caption{Snapshots of droplet icing. Red, peachy orange, and blue represent the water, ice, and air, respectively. The initial shape of the water droplet is approximated as a circular segment with a radius of 1.6 mm and a height of 2.28 mm. The substrate temperature is -25°C and the initial temperatures of both water and air are 25°C. From left to right and top to bottom, the times are t = 0, 4.6, 9.2, 13.8, 18.4, 23.0, 27.6, 32.2, 36.8, 37.3, 37.8, and 38.4 s.} %
\label{icing}
\end{figure}
The cooling effect from the cold substrate causes water droplets to freeze from the bottom up gradually. After approximately 38.4 seconds, the water droplet completely transforms into an ice droplet, with a pointy tip formed. During the solidification process, water droplets gradually increase in height due to expansion. Meanwhile, the width of the water droplets remains approximately constant. This is because the surface of the water droplets is cooled by the air, forming a thin layer of ice prematurely. This layer of ice prevents the water from expanding sideways, resulting in the primary expansion of the water droplets being upward during the freezing process. This upward expansion eventually “squeezes” out a tip (see the last snapshot of Fig.~\ref{icing}).

\begin{figure}[ht!]
    \centering 
    \includegraphics[width=0.99\textwidth]{./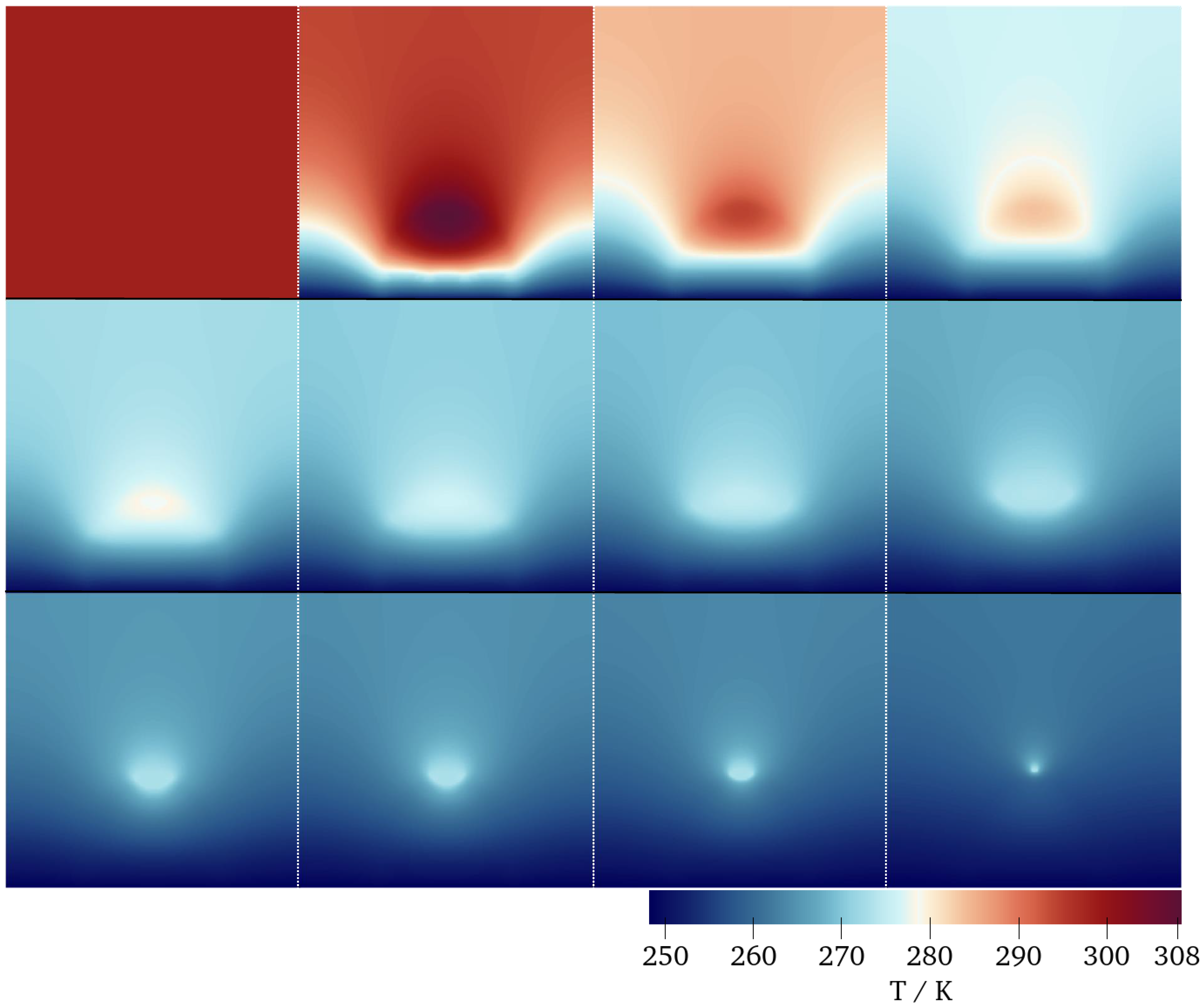}
    \caption{Snapshots of temperature field evolution during droplet icing process. The substrate temperature is -25°C (see the bottom of each image), and the initial temperatures of both water and air are 25°C. From left to right and top to bottom, the times are t = 0, 4.6, 9.2, 13.8, 18.4, 23.0, 27.6, 32.2, 36.8, 37.3, 37.8, and 38.4 s.} %
\label{icing_T}
\end{figure}
The corresponding evolution of the temperature field is shown in Fig.~\ref{icing_T}. The initial temperature of the water droplet as well as the ambient air is 25°C and the whole domain is cooled to -25°C by the cold substrate. The profiles of the positive temperature areas in the last 8 snapshots are consistent with the liquid regions of the droplet profiles in Fig.~\ref{icing}, demonstrating that the droplet freezing is mainly influenced by the heat transfer process. 

The air region on the two sides reaches low temperatures more rapidly than the water droplet. This is because air enjoys a larger thermal diffusivity $\alpha$, defined as $\frac{\kappa}{\rho c_p}$. This in turn proves the conclusion that the surface of the droplet solidifies more quickly than the inside resulting in the volume expansion mainly in the vertical direction.

\subsection{Effects of the substrate temperature}

\begin{figure}[ht!]
\centering
  \begin{minipage}[t]{0.3\linewidth}
    \centering 
    \includegraphics[width=1.0\textwidth]{./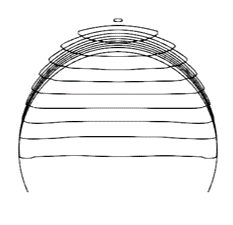}
    \caption*{(a)} 
   \end{minipage}
  \begin{minipage}[t]{0.3\linewidth}
    \centering 
    \includegraphics[width=1.0\textwidth]{./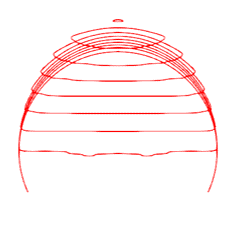}
    \caption*{(b)} 
   \end{minipage}
  \begin{minipage}[t]{0.3\linewidth}
    \centering 
    \includegraphics[width=1.0\textwidth]{./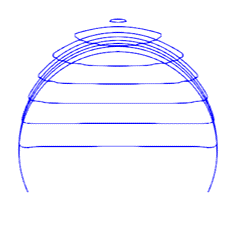}
    \caption*{(c)} 
   \end{minipage}
\caption{Droplet profiles and ice front evolutions of different substrate temperatures: (a) $T_{sub}$ = -20°C, (b) $T_{sub}$ = -25°C, and (c) $T_{sub}$ = -30°C. The initial shape of the water droplet is approximated as a circular segment with a radius of 1.6 mm and a height of 2.28 mm. From the bottom to the top, the nearly horizontal lines represent the evolution of the ice front, with a time interval of 4.6 s between adjacent lines and the first line corresponding to t = 4.6 s.} 
\label{icing_profile}
\end{figure}

\begin{figure}[ht!]
\centering
  \begin{minipage}[t]{0.49\linewidth}
    \centering 
    \includegraphics[width=1.0\textwidth]{./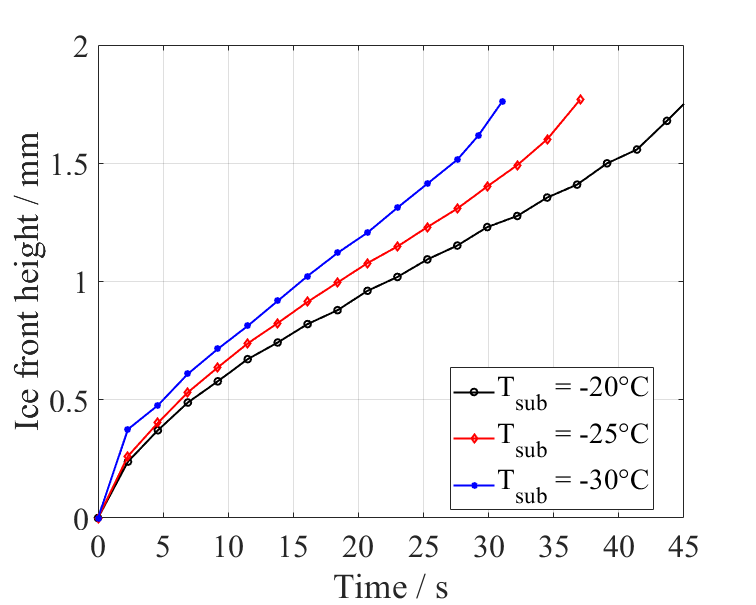}
    \caption*{(a)} 
\end{minipage}
\begin{minipage}[t]{0.49\linewidth}
    \centering 
\includegraphics[width=1.0\textwidth]{./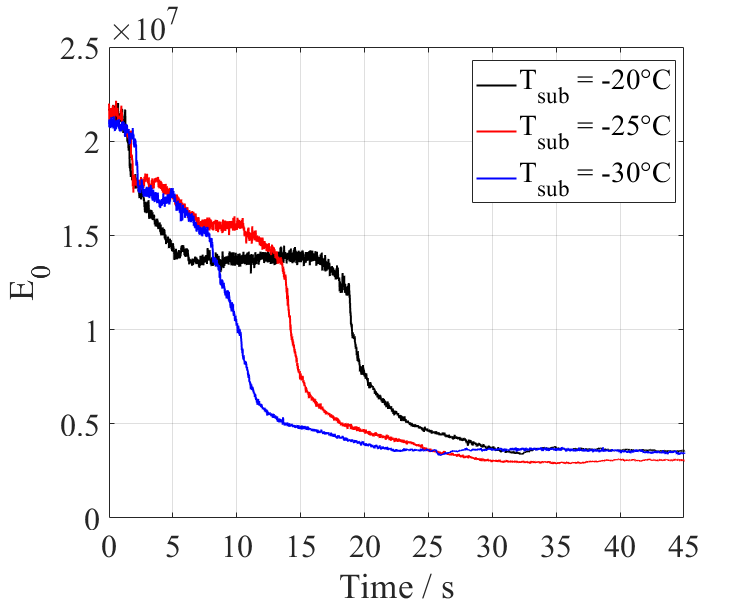}
    \caption*{(b)} 
\end{minipage}
\caption{Evolution of (a) ice front and (b) energy $E_0$ of different substrate temperatures. The radius of the droplets is 1.6 mm.} 
\label{icing_front_energy}
\end{figure}
\begin{figure}[ht!]
    \centering 
    \includegraphics[width=0.99\textwidth]{./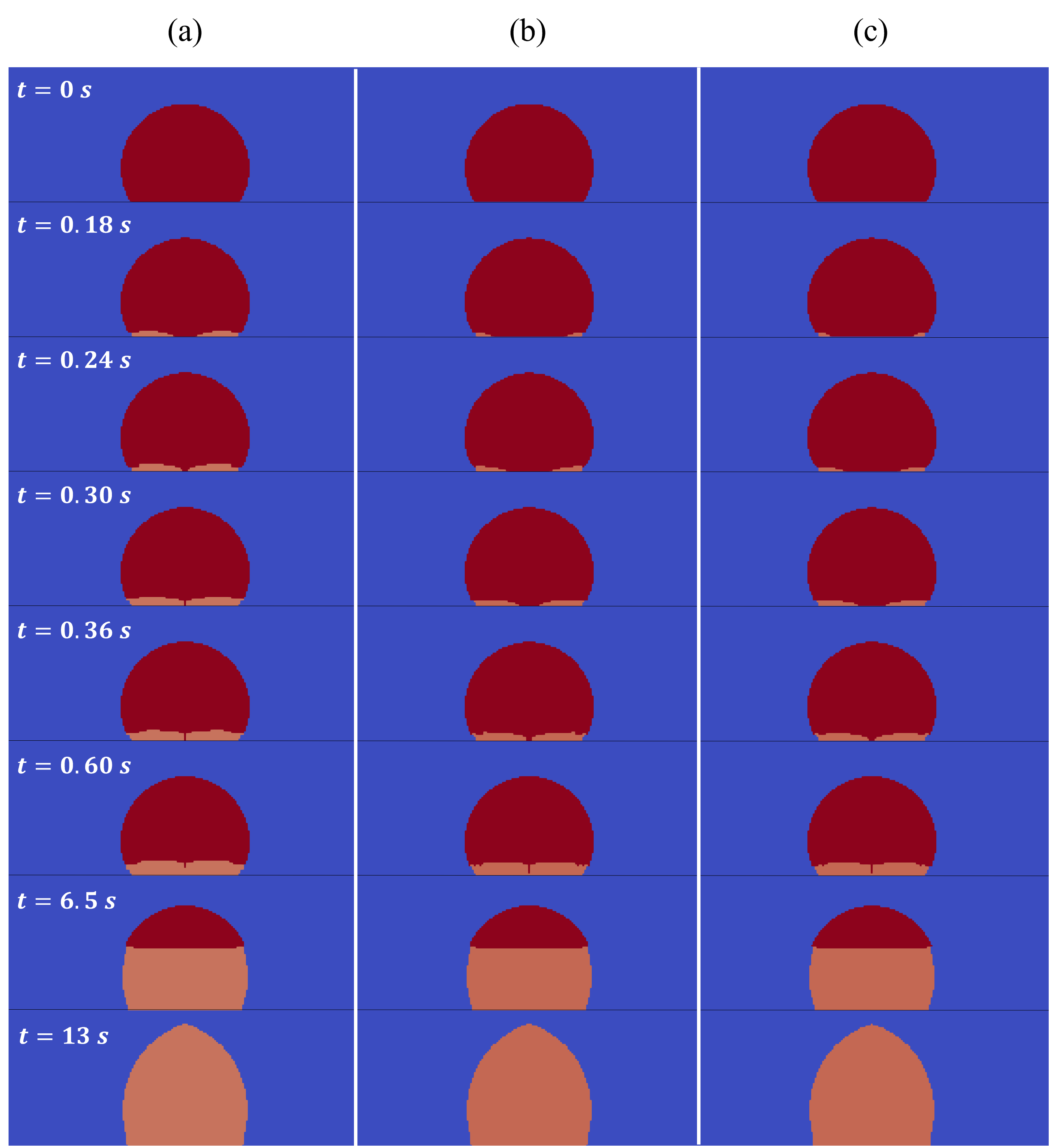}
    \caption{Ice front evolution with different air temperatures. From left to right, the columns represent (a) $T_a = -15 ^{\circ}$C, (b) $T_a = 5 ^{\circ}$C, and (c) $T_a = 25 ^{\circ}$C, respectively. The initial shape of the water droplet is approximated as a circular segment with a radius of 0.8 mm and a height of 1.36 mm. The temperature of substrates is $-25 ^{\circ}$C.} 
\label{Ta}
\end{figure}

Fig.~\ref{icing_profile}(a), (b) and (c) demonstrate the freezing front evolution driven by substrate temperature $T_{sub}$ of $-20 ^{\circ}$C, $-25 ^{\circ}$C, and $-30 ^{\circ}$C, respectively. The initial temperatures of the water droplets and their surrounding air are $25 ^{\circ}$C. The initial shape and radius of the droplets are the same as those in Section \ref{3.1}. The dimensionless numbers are $Pr$ = 7.1, $Bo$ = 0.35, $We$ = 1.7 $\times{10}^{-7}$, $St$ = 0.25, 0.31, and 0.38 for Fig.~\ref{icing_profile}(a), (b) and (c), respectively.

The lower temperature of the substrate speeds up the freezing process, as depicted in Fig.~\ref{icing_profile} and Fig.~\ref{icing_front_energy}(a), while leaving the final shape of the ice droplets unaffected. Due to the cooling effect of the air, the region of the droplet near the air reaches a lower temperature first, causing the freezing fronts to change from approximately a horizontal shape to a concave one, agreeing well with the experiments \cite{marin2014universality}. 

Fig.~\ref{icing_front_energy}(b) plots the time evolutions of the energy $E_0$ (see the formulaic expression in Eq.~(\ref{E0})). It can be seen that the energy for all the cases decreases with time before reaching equilibrium states. The computational results demonstrate that the proposed model satisfies the energy dissipation law.

\subsection{Effects of the air temperature}
Fig.~\ref{Ta} is the freezing front evolution driven by supercooled substrates with a temperature of -25°C. The domain is 8 mm $\times$ 8 mm $\times$ 0.08 mm with the droplet radius 0.8 mm (red region) surrounded by the air (blue region). The initial height of the droplet is set as 1.36 mm. Based on the established three-phase numerical model, the influence of the air is investigated by parameter studies. As shown in the figure, the initial temperatures of ambient air are given as $-15 ^{\circ}$C, $5 ^{\circ}$C, and $25 ^{\circ}$C, respectively. The initial temperature of the water droplets is $25 ^{\circ}$C. The dimensionless numbers are $Pr$ = 7.1, $Bo$ = 0.086, $We$ = 3.4 $\times{10}^{-7}$, and $St$ = 0.31. 

The results indicate that air temperature $T_a$ primarily influences the early stages of the droplet icing process. As shown in Fig.~\ref{Ta}, lower $T_a$ helps the supercooled substrate to cool the water droplets, leading to an earlier onset of icing and consequently a faster icing rate (see $t = 0 - 0.6$ s of Fig.~\ref{Ta}). For $t > 0.6$ s, the influence of the air becomes less significant, as different initial air temperatures are cooled to similar levels by the substrate cooling surface. The icing process is then mainly determined by the temperature of the supercooled substrates and the final profiles of the ice droplet are the same though the case with lower initial $T_a$ can reach its final state slightly faster.

\subsection{Three-dimensional icing simulation and experiments}
\begin{table*}[ht!]
    \centering
     \caption{Setup and computational costs of three-dimensional icing simulations}
    \resizebox{\textwidth}{!}{ %
    \begin{tabular}{ccccccc}
        \toprule  
        Methods & \makecell[c]{Grids\\number} & \makecell[c]{CPU\\cores} & \makecell[c]{Time step\\(s)} & Steps & \makecell[c]{Calculation\\time (hour)}& \makecell[c]{Time/step\\(s/step)} \\ 
        \midrule  
        \makecell[c]{Proposed phase-field\\method} & 100 $\times$100$\times$100 & 50 & $1.2\times10^{-6}$ & $8.65\times10^{6}$ & 117.2 & 0.049\\
        \midrule
        \makecell[c]{VoF method in\\Ansys Fluent} &100 $\times$100$\times$100 &$50$ & $4.0\times10^{-6}$& $1.375\times10^{5}$ & 29.5 & 0.77 \\
        \bottomrule  
    \end{tabular} 
    } %
    \label{table costs}
\end{table*}
\begin{figure}[ht!]
    \centering 
    \includegraphics[width=0.85\textwidth]{./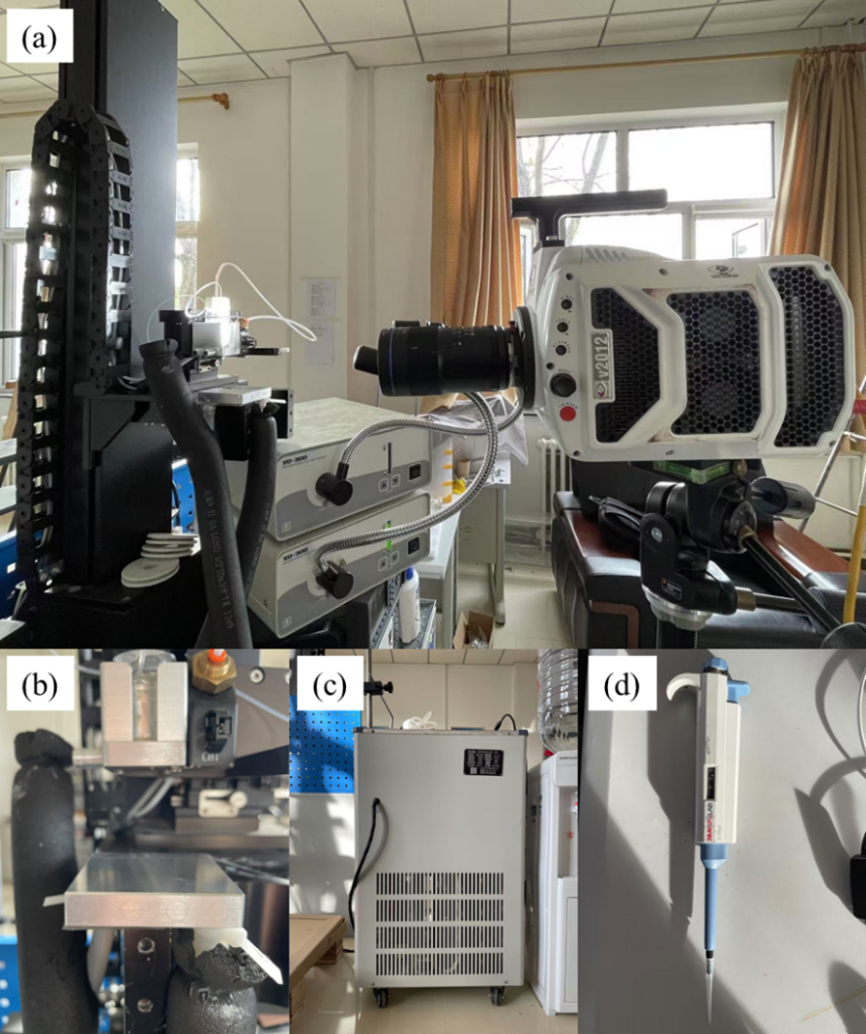}
    \caption{Experimental apparatus. (a) High-speed camera, (b) supercooled substrate, (c) low-temperature thermostatic reaction bath, and (d) droplet generator.} %
\label{experiments}
\end{figure}

\begin{figure}[ht!]
    \centering 
    \includegraphics[width=0.99\textwidth]{./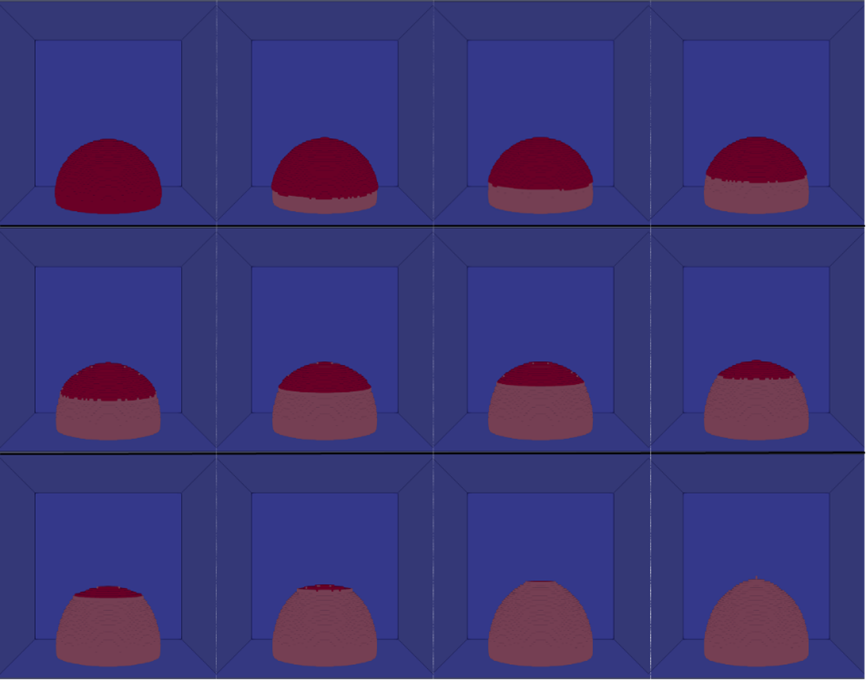}
    \caption{Numerical results of the three-dimensional droplet icing obtained by the proposed phase-field model. The initial shape of the water droplet is approximated as a spherical cap with a radius of 1.08 mm and a height of 1.33 mm. The substrate temperature is -25.3°C and the initial temperature of the water droplet is 7.4°C. From left to right and top to bottom, the times are t = 0, 1, 2, 3, 4, 5, 6, 7, 8, 9, 10, and 10.4 s.} %
\label{3d icing}
\end{figure}
\begin{figure}[ht!]
    \centering 
    \includegraphics[width=0.99\textwidth]{./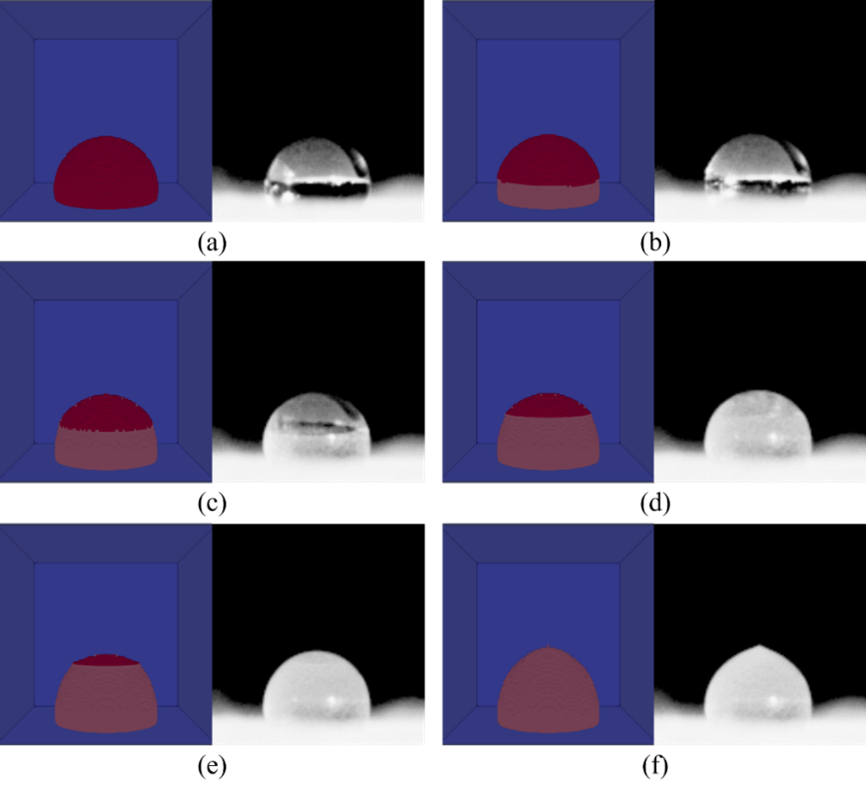}
    \caption{Comparison of the numerical results with the experimental ones. The images on the left of (a)-(f) show the numerical results and those on the right show the experimental ones. (a)-(e) are the snapshots at t = 0, 2, 4, 6, and 8 s, respectively. (f) is the final state of the droplet (t = 10.4 s for the numerical result and t = 10.0 s for the experimental result).} %
\label{comp_with_experiments}
\end{figure}
\begin{figure}[ht!]
    \centering 
    \includegraphics[width=0.99\textwidth]{./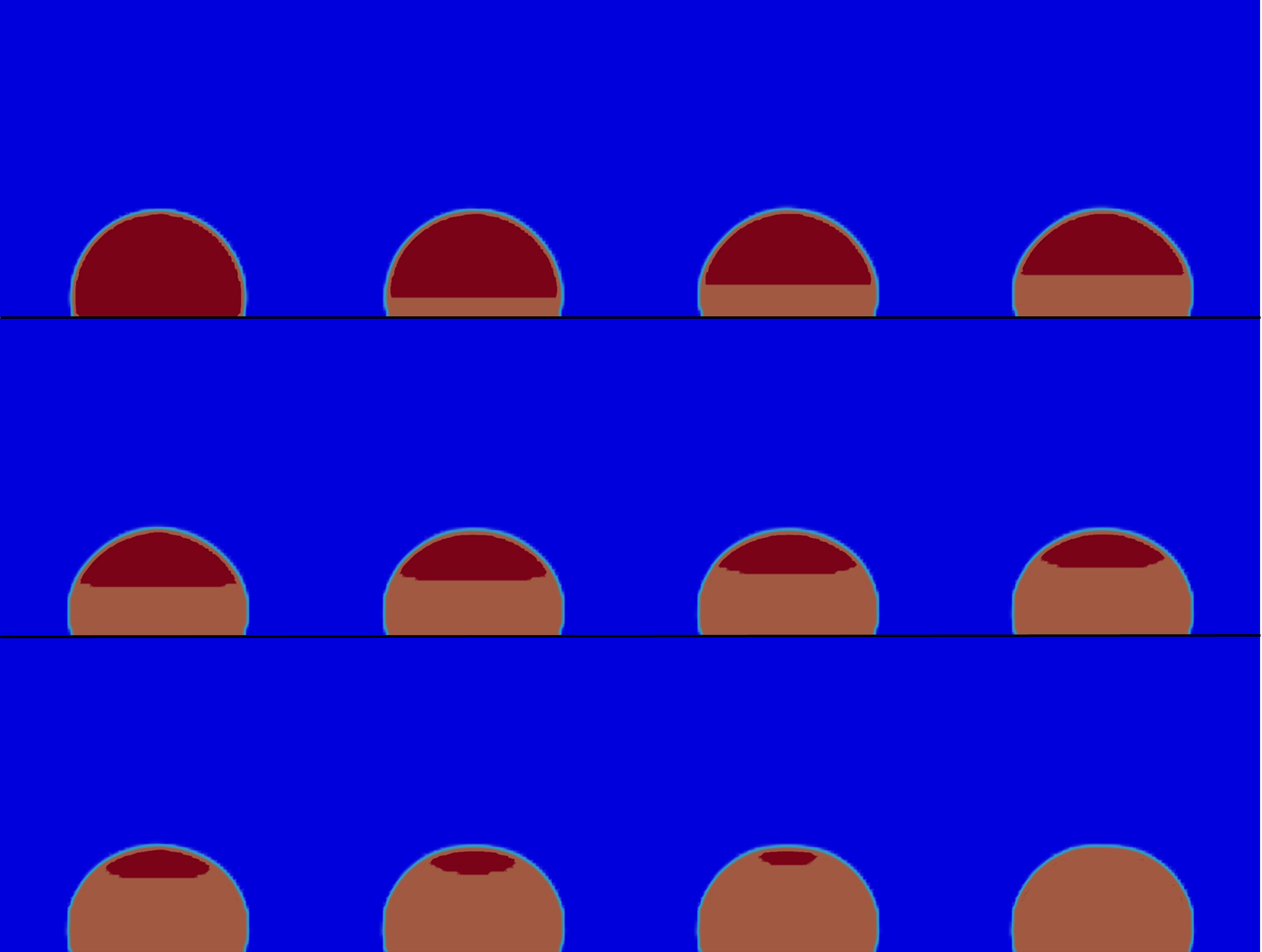}
    \caption{Numerical results of the droplet icing obtained by the VoF method in Ansys Fluent. The initial shape of the water droplet is approximated as a spherical cap with a radius of 1.08 mm and a height of 1.33 mm. The substrate temperature is -25.3°C and the initial temperature of the water droplet is 7.4°C. From left to right and top to bottom, the times are t = 0, 0.05, 0.1, 0.15, 0.2, 0.25, 0.3, 0.35, 0.4, 0.45, 0.5, and 0.55 s.} %
\label{3d_icing_vof}
\end{figure}

Three-dimensional droplet icing processes are simulated in this section following the previous numerical simulations of droplet icing in the Hele-Shaw cell. The results are first compared with laboratory experiments, and subsequently with the Volume-of-Fluid (VoF) method implemented in Ansys Fluent.

Fig.~\ref{experiments} shows the experimental setup. A high-speed camera (Fig.~\ref{experiments}(a)) is used to record the icing process on the supercooled substrate (Fig.~\ref{experiments}(b)). The temperature of the substrate is controlled by a low-temperature thermostatic reaction bath (Fig.~\ref{experiments}(c)) and the droplet generator (Fig.~\ref{experiments}(d)) is used to create millimetre-sized droplets.

The substrate temperature is $-25.3 ^{\circ}$C, and the initial temperature of the water droplet is $7.4 ^{\circ}$C. The initial shape of the water droplet is approximated as a spherical cap with a radius of 1.08 mm and a height of 1.33 mm. The dimensionless numbers are $Pr$ = 7.1, $Bo$ = 0.16, $We$ = 2.5 $\times{10}^{-7}$, and $St$ = 0.32.

For the numerical simulation, 100 $\times$ 100 $\times$ 100 grids are used, with a time step of 1.2 $\times$ 10$^{-6}$ seconds for the proposed phase-field method and 4.0 $\times$ 10$^{-6}$ seconds for the VoF method implemented in Ansys Fluent, respectively. The specific setup and computational costs of the proposed phase-field method and the Fluent VoF method are summarized in Table \ref{table costs}.

Fig.~\ref{3d icing} presents the snapshots of the three-dimensional droplet icing process obtained by the proposed phase-field model. The comparison of the numerical results with the experimental data is presented in Fig.~\ref{comp_with_experiments}. Snapshots at different times show that the numerical results agree well with the experiments. Pointy tips are formed at the final stages of both approaches. The solidification times are close: 10.4 seconds for the numerical results and 10 seconds for the experiments. See the videos of the icing process for both the numerical results and the experiments in the supplemental material.

The numerical results are finally compared with those obtained by the Fluent VoF method (see Fig.~\ref{3d_icing_vof}). The results reveal several differences:
\begin{enumerate}
    \item In the phase-field results, the volume expansion primarily occurs in the vertical direction, whereas in the Fluent VoF results, the expansion is predominantly towards the sides.\\
    \item The experimentally observed pointy tip for the frozen droplet is predicted by the proposed phase-field model but not the Fluent VoF method.\\
    \item  The successful simulation of the solidification process and the pointy tip by the proposed method comes with more computational cost compared against the Fluent VoF method.
\end{enumerate}

\section{Conclusions} \label{5}
In this paper, we establish a three-phase AC-CH-NS phase-field model by combining the NS equations and the energy equation with the AC and CH equations. The proposed model can fully utilize the advantages of both the AC equation and the CH equation, e.g., the AC equation can take the latent heat release into account during the phase-changing process while the CH equation is volume conserved in the whole domain. This so-called AC-CH-NS scheme can be further extended to AC-AC-NS or CH-CH-NS schemes, depending on the properties of the corresponding multiphase flows.

We then establish the temporal and spatial discretization schemes for the proposed AC-CH-NS model in detail with the finite difference method. This scheme is then proved to satisfy the energy dissipation law, which is an important feature of the phase-field model since it demonstrates the model is unconditionally stable. The density and viscosity in our scheme are variables, resulting in challenges to prove the energy stability of the corresponding numerical scheme \cite{shen2012modeling}. In the previous efforts, the energy stability of the two-phase phase-field models, e.g., AC-NS model and CH-NS model, has been well studied. However, for the three-phase flows, to our knowledge, it is the first time to establish an energy-stable AC-CH-NS discrete scheme with varying density and viscosity.

By employing the established phase-field model, the water droplet cooled by supercooled substrates is numerically simulated and the pointy tips are well obtained after solidification. We note that in many of the previous numerical studies, the pointy tips were not well captured, though these singularity tips are obvious in nature and experiments. Our results demonstrate that the pointy tips are formed due to the upward expansion during the icing process. Specifically, the cooling of the water droplet's surface by the surrounding air leads to the formation of a thin ice layer. This ice layer restricts lateral expansion of the water, causing the droplet to expand predominantly upwards as it freezes, thereby resulting in the formation of pointed tips. The parameter studies on the temperatures of both the substrates and the initial surrounding air show that $T_{sub}$ mainly influences the icing speed while $T_a$ primarily affects the icing process in the very early stages with the final shapes and the pointy tips of the droplet remaining the same. Furthermore, the icing front at the upper hemisphere of the droplet exhibits concave shapes due to the difference in the thermal diffusivity between air and water.

Experiments have been conducted to validate the numerical results obtained from the proposed phase-field model. The numerical simulations agree well with the experimental observations.

Overall, our work contributes to the fundamental properties of the phase-field method, i.e., the energy stability, and paves the way for further studies on the applications of the phase-field model, e.g., crystal growth, solidification in alloys, and other kinds of multiphase flows.
\section*{Acknowledgments}
The authors gratefully acknowledge financial support from the China Scholarship Council, National Natural Science Foundation of China
(Grant No. 92152109, and Grant No. 11988102), the ARCHER2-eCSE05-6 project, and the EU H2020-MSCA-RISE-2017 project CTFF (Control of Turbulent Friction Force) under the grant agreement number 777717.


\appendix
\section{Proof of the discrete energy law}
\label{app}

The proof of discrete energy law shown in Theorem~\ref{theorem 3} is as follows:
\begin{proof}
By multiplying Eq.~(\ref{ac_time}) with $2(c^{n+1}-c^n)$ and taking the $L^2$-inner product, we get
\begin{equation}
\begin{aligned}
 &2\Delta t (\frac{c^{n+1}-c^n}{\Delta t} + \boldsymbol{u}^{n+1}\cdot \nabla c^n, \frac{c^{n+1}-c^n}{\Delta t}) + 2S_1 \norm {c^{n+1}-c^n}^2 \\
 & = - M_c \xi_c ^2 (\norm {\nabla c^{n+1}} ^2 -\norm {\nabla c^{n}}^2  + \norm {\nabla c^{n+1} -\nabla c^{n}} ^2)- 2 M_c (F'(c^n), c^{n+1}-c^n),
 \label{ac_time_l2}
\end{aligned}
\end{equation}
where the property $(b-a,2b) = \abs b ^2 - \abs a ^2 + \abs{b-a}^2$ is used.

The last term in Eq.~(\ref{ac_time_l2}) can be rewritten following the Taylor expansion:
\begin{equation}
      F'(c^{n})(c^{n+1}- c^n)
      = F(c^{n+1}) -  F(c^{n})  
      - \frac{F''(\epsilon_1)}{2} (c^{n+1}- c^n)^2,
      \label{taylor1}
\end{equation}
where $\epsilon_1$ is a value of $c$ in $[-1,0]$ and $F''(c)$ denotes the second derivative of $F(c)$ with respect to $c$:
\begin{equation}
    F''(c) = 6c^2+6c+1 \leq 1 \ \ \ \  {\rm for}\  c \in [-1,0].
    \label{f2c}
\end{equation}

By multiplying Eq.~(\ref{ch_time}) with $2 \Delta t \mu_\phi^{n+1}$ and taking the $L^2$-inner product, we get
\begin{equation}
   2 (\phi^{n+1}-\phi^n, \mu^{n+1}_\phi) + 2 \Delta t (\boldsymbol{u}^{n+1}\cdot \nabla \phi ^n, \mu_\phi^{n+1}) = -2 M_\phi \Delta t \norm {\nabla \mu_\phi^{n+1}}^2.
    \label{ch_time_l2}
\end{equation}

By multiplying Eq.~(\ref{muphi_time}) with $2(\phi^{n+1} - \phi^{n})$ and taking the $L^2$-inner product, we get
\begin{equation}
\begin{aligned}
    2(\mu_\phi^{n+1}, \phi^{n+1} - \phi^{n}) &- 2S_2 \norm {\phi^{n+1}-\phi^n}^2 \\
   & = B_2 \xi^2_\phi (\norm {\nabla \phi^{n+1}}^2 - \norm {\nabla \phi^{n}}^2 + \norm {\nabla \phi^{n+1}- \nabla \phi^{n}}^2) \\
   &+ 2 B_2 (G'(\phi^n), \phi^{n+1}- \phi^{n}).
    \label{muphi_time_l2}
\end{aligned}
\end{equation}

The last term in Eq.~(\ref{muphi_time_l2}) can be rewritten following the Taylor expansion:
\begin{equation}
      G'(\phi^{n})(\phi^{n+1}- \phi^n)
      = G(\phi^{n+1}) -  G(\phi^{n})  
      - \frac{G''(\epsilon_2)}{2} (\phi^{n+1}- \phi^n)^2,
      \label{taylor2}
\end{equation}
where $\epsilon_2$ is a value of $\phi$ in $[-1,1]$ and $G''(\phi)$ denotes the second derivative of $G(\phi)$ with respect to $\phi$:
\begin{equation}
    G''(\phi) = 3\phi^2 -1 \leq 2 \ \ \ \  {\rm for}\  \phi \in [-1,1].
    \label{g2phi}
\end{equation}

Taking the $L^2$-inner product of Eq.~(\ref{NS1_time}) with $2 \Delta t \boldsymbol{u}^{n+1}$, we have
\begin{equation}
    \begin{aligned}
  & \norm{\sqrt{\rho^{n+1}}\boldsymbol{u}^{n+1}}^2
    - \norm{\sqrt{\rho^{n}}\boldsymbol{u}^{n}} ^2
    + \norm{\sqrt{\rho^{n}}(\boldsymbol{u}^{n+1}-\boldsymbol{u}^{n})}^2 \\
   & = -2\Delta t (p^{n+1} - 2 p^n + p^{n-1}, \nabla \cdot \boldsymbol{u}^{n+1})
    +  2\Delta t (p^{n+1}, \nabla \cdot \boldsymbol{u}^{n+1}) \\
   & - 2\Delta t \norm{\sqrt{\mu^{n+1}}D(\boldsymbol{u}^{n+1})}^2
    - 2\Delta t\frac{B_1}{M_c} ((\frac{c^{n+1}-c^n}{\Delta t} + \boldsymbol{u}^{n+1} \cdot \nabla c^n) \nabla c^n, \boldsymbol{u}^{n+1}) \\
   & +  2\Delta t(\mu_\phi ^{n+1} \nabla \phi ^n, \boldsymbol{u}^{n+1})
   +  2\Delta t (\boldsymbol{F}_b ^n, \boldsymbol{u}^{n+1}),
        \label{NS1_time_l2}
    \end{aligned}
\end{equation}
where the properties
\begin{equation}
\begin{aligned}
      (\frac{1}{2}(\rho^{n+1}+\rho^{n})\boldsymbol{u}^{n+1}-\rho^n \boldsymbol{u}^{n}, 2\boldsymbol{u}^{n+1})
   &= \norm{\sqrt{\rho^{n+1}}\boldsymbol{u}^{n+1}}^2
    - \norm{\sqrt{\rho^{n}}\boldsymbol{u}^{n}} ^2 \\
   & + \norm{\sqrt{\rho^{n}}(\boldsymbol{u}^{n+1}-\boldsymbol{u}^{n})}^2,
\end{aligned}
\end{equation}
and
\begin{equation}
    \int_\Omega (\rho\boldsymbol{u_1}\cdot \nabla)\boldsymbol{u_2} \cdot \boldsymbol{u_2}d \boldsymbol {x}
+ \frac{1}{2} \int_\Omega \nabla\cdot (\rho\boldsymbol{u_1})\boldsymbol{u_2}\cdot \boldsymbol{u_2}d \boldsymbol {x} =0,\ {\rm if} \ \boldsymbol{u_1}\cdot \boldsymbol n |_{\partial \Omega}=0,
\end{equation}
are used with $\boldsymbol{u_1}$, and $\boldsymbol{u_2}$ representing two velocity vectors \cite{shen2012modeling}.

Taking the $L^2$-inner product of Eq.~(\ref{NS2_time}) with $2\frac{ (\Delta t)^2}{\bar{\rho}}(p^{n+1} - 2 p^n + p^{n-1})$, we have
\begin{equation}
\begin{aligned}
       - \frac{(\Delta t)^2}{\bar{\rho}} (\norm{\nabla (p^{n+1} - p^n)}^2 - \norm{\nabla (p^{n} - p^{n-1})}^2 + \norm{\nabla (p^{n+1}-2p^{n} + p^{n-1})}^2 )\\
    =2 \Delta t (\nabla \cdot \boldsymbol{u}^{n+1}, p^{n+1} - 2 p^n + p^{n-1}).
    \label{NS2_time_l2_1}
\end{aligned}
\end{equation}

By multiplying Eq.~(\ref{NS2_time}) with $2\frac{(\Delta t)^2}{\bar{\rho}}p^{n+1}$ and taking the $L^2$-inner product, we get

\begin{equation}
\begin{aligned}
   - \frac{(\Delta t)^2}{\bar{\rho}} (\norm{\nabla p^{n+1}}^2 - \norm{\nabla p^{n}}^2  + \norm {\nabla (p^{n+1} - p^n)}^2 ) = 2 \Delta t (\nabla \cdot \boldsymbol{u}^{n+1}, p^{n+1}).
     \label{NS2_time_l2_2}
\end{aligned}
\end{equation}

Taking the difference of Eq.~(\ref{NS2_time}) at time steps $n+1$ and $n$, we obtain
\begin{equation}
    \nabla(p^{n+1}-2p^n +p^{n-1} ) = \frac{\bar{\rho}}{\Delta t} (\boldsymbol{u}^{n+1} -\boldsymbol{u}^{n}).
    \label{NS2_time_n+1_n}
\end{equation}

Taking the inner product of both sides of the equation themselves, we obtain
\begin{equation}
   \norm { \nabla(p^{n+1}-2p^n +p^{n-1} ) }^2 = \frac{\bar{\rho}^2}{(\Delta t)^2} \norm {(\boldsymbol{u}^{n+1} -\boldsymbol{u}^{n})}^2.
    \label{NS2_time_n+1_n_squre}
\end{equation}

By substituting Eq.~(\ref{NS2_time_l2_1}), Eq.~(\ref{NS2_time_l2_2}), and Eq.~(\ref{NS2_time_n+1_n_squre}) into Eq.~(\ref{NS1_time_l2}), we have
\begin{equation}
    \begin{aligned}
  & \norm{\sqrt{\rho^{n+1}}\boldsymbol{u}^{n+1}}^2
    - \norm{\sqrt{\rho^{n}}\boldsymbol{u}^{n}} ^2
    + (\rho^{n}- \bar{\rho}) \norm{(\boldsymbol{u}^{n+1}-\boldsymbol{u}^{n})}^2 \\
    & = \frac{(\Delta t)^2}{\bar{\rho}} (- \norm{\nabla(p^n-p^{n-1})}^2 -    \norm{\nabla p^{n+1}}^2 + \norm{\nabla p^n}^2)\\
   & - 2\Delta t \norm{\sqrt{\mu^{n+1}}D(\boldsymbol{u}^{n+1})}^2
     - 2\Delta t\frac{B_1}{M_c}((\frac{c^{n+1}-c^n}{\Delta t} + \boldsymbol{u}^{n+1} \cdot \nabla c^n) \nabla c^n, \boldsymbol{u}^{n+1}) \\
   & +  2\Delta t(\mu_\phi ^{n+1} \nabla \phi ^n, \boldsymbol{u}^{n+1})
   +  2\Delta t (\boldsymbol{F}_b ^n, \boldsymbol{u}^{n+1}).
        \label{NS1_time_l2_4}
    \end{aligned}
\end{equation}

By combining Eq.~(\ref{NS1_time_l2_4}) with $-\frac{B_1}{M_c}$ timing Eq.~(\ref{ac_time_l2}), Eq.~(\ref{taylor1}), Eq.~(\ref{ch_time_l2}), Eq.~(\ref{muphi_time_l2}), and Eq.~(\ref{taylor2}), we have
\begin{equation}
    \begin{aligned}
  & \norm{\sqrt{\rho^{n+1}}\boldsymbol{u}^{n+1}}^2
    - \norm{\sqrt{\rho^{n}}\boldsymbol{u}^{n}} ^2
    + (\rho^{n}- \bar{\rho}) \norm{(\boldsymbol{u}^{n+1}-\boldsymbol{u}^{n})}^2 \\
    & = \frac{(\Delta t)^2}{\bar{\rho}} (- \norm{\nabla(p^n-p^{n-1})}^2 -    \norm{\nabla p^{n+1}}^2 + \norm{\nabla p^n}^2)\\
   & - 2\Delta t \norm{\sqrt{\mu^{n+1}}D(\boldsymbol{u}^{n+1})}^2
    - 2 \Delta t\frac{B_1}{M_c} \norm{\frac{c^{n+1}-c^n}{\Delta t} + \boldsymbol{u}^{n+1} \cdot \nabla c^n }^2 \\
    & -2 S_1\frac{B_1}{M_c} \norm{c^{n+1}-c^n}^2
    - B_1 \xi^2_c (\norm {\nabla c^{n+1}} ^2 -\norm {\nabla c^{n}}^2  + \norm {\nabla c^{n+1} -\nabla c^{n}} ^2)\\
   &- 2 B_1 (F(c^{n+1})-F(c^{n}), 1) 
   + 2 B_1 (\frac{F''(\epsilon_1^n)}{2}(c^{n+1}- c^n)^2, 1) \\
   & -2 M_\phi \Delta t \norm {\nabla \mu_\phi^{n+1}}^2 
   -2 S_2 \norm {\phi^{n+1}-\phi^n}^2 \\
   &- B_2 \xi^2_\phi (\norm {\nabla \phi^{n+1}}^2 - \norm {\nabla \phi^{n}}^2
   + \norm {\nabla \phi^{n+1}- \nabla \phi^{n}}^2)\\
   &-2 B_2 (G(\phi^{n+1})-G(\phi^{n}), 1)
    +2 B_2 (\frac{G''(\epsilon_2^n)}{2}(\phi^{n+1}- \phi^n)^2, 1) \\
   &-2 (E^{n+1}_g-E^{n}_g, 1),
    \label{NS1_time_l2_5}
    \end{aligned}
\end{equation}
where $2\Delta t (\boldsymbol{F}_b ^n, \boldsymbol{u}^{n+1})= -2(E^{n+1}_g-E^{n}_g, 1)$ is used.

Taking the properties that
\begin{equation}
\begin{aligned}
    &    2 B_1 (\frac{F''(\epsilon_1^n)}{2}(c^{n+1}- c^n)^2, 1) 
    - 2 S_1 \frac{B_1}{M_c} \norm{c^{n+1}-c^n}^2\\
   & \leq 2 B_1 (\frac{1}{2}-\frac{S_1}{M_c}) \norm{c^{n+1}-c^n}^2\\
   & \leq 0,
\end{aligned}
\end{equation}
and 
\begin{equation}
\begin{aligned}
    &2 B_2 (\frac{G''(\epsilon_2^n)}{2}(\phi^{n+1}- \phi^n)^2, 1)
    -2 S_2 \norm {\phi^{n+1}-\phi^n}^2 \\
    & \leq 2(B_2 - S_2) \norm {\phi^{n+1}-\phi^n}^2\\
     & \leq 0,
\end{aligned}
\end{equation}
and
\begin{equation}
    (\rho^{n}- \bar{\rho}) \norm{(\boldsymbol{u}^{n+1}-\boldsymbol{u}^{n})}^2 \geq 0,
\end{equation}
where $F''(c)\leq 1 $ (see Eq.~(\ref{f2c})), $G''(\phi)\leq 2 $ (see Eq.~(\ref{g2phi})), $S_1 \geq \frac{1}{2} M_c$, $S_2 \geq B_2$, and $\rho^{n} \geq \bar{\rho}$ are used,
Eq.~(\ref{NS1_time_l2_5}) can be finally simplified as follows
\begin{equation}
    \begin{aligned}
  & \norm{\sqrt{\rho^{n+1}}\boldsymbol{u}^{n+1}}^2
    - \norm{\sqrt{\rho^{n}}\boldsymbol{u}^{n}} ^2
    + \frac{(\Delta t)^2}{\bar{\rho}} \norm{\nabla (p^n- p^{n-1})}^2 \\
  & + \frac{(\Delta t)^2}{\bar{\rho}} (\norm{\nabla p^{n+1}}^2 - \norm{\nabla p^n}^2) 
  + 2 \Delta t \norm{\sqrt{\mu^{n+1}}D(\boldsymbol{u}^{n+1})}^2 \\
  & + 2 \Delta t\frac{B_1}{M_c} \norm{\frac{c^{n+1}-c^n}{\Delta t} + \boldsymbol{u}^{n+1} \cdot \nabla c^n }^2 \\
  &+ B_1 \xi^2_c (\norm {\nabla c^{n+1}} ^2 -\norm {\nabla c^{n}}^2  + \norm {\nabla c^{n+1} -\nabla c^{n}} ^2)
  + 2 B_1 (F(c^{n+1})-F(c^{n}), 1) \\ 
 & + 2 M_\phi \Delta t \norm {\nabla \mu_\phi^{n+1}}^2
   + B_2 \xi^2_{\phi}(\norm {\nabla \phi^{n+1}}^2 - \norm {\nabla \phi^{n}}^2 + \norm {\nabla \phi^{n+1}- \nabla \phi^{n}}^2) \\
  &+ 2 B_2 (G(\phi^{n+1})-G(\phi^{n}), 1)  
  +2 (E^{n+1}_g-E^{n}_g, 1)\\
  & \leq 0.
    \label{NS1_time_l2_6}
    \end{aligned}
\end{equation}

The above inequality completes the proof.
\end{proof}



\bibliographystyle{elsarticle-num} 
\bibliography{refs}





\end{document}